\documentclass[a4paper,fleqn]{cas-sc}

\usepackage[authoryear]{natbib}

\def\tsc#1{\csdef{#1}{\textsc{\lowercase{#1}}\xspace}}
\tsc{WGM}
\tsc{QE}

\usepackage{amssymb}
\usepackage{amsmath}
\usepackage{amsthm}
\usepackage{mathtools}
\usepackage{siunitx}
\usepackage{enumitem}
\usepackage{bm}
\usepackage{epsfig}
\usepackage{tikz}
\usetikzlibrary{arrows.meta,decorations.pathreplacing,calc,shadows,backgrounds}
\usepackage{caption, subcaption}
\usepackage{float}
\usepackage{hhline}
\usepackage{algorithm}
\usepackage{algpseudocode}
\usepackage{placeins}
\usepackage{booktabs}
\allowdisplaybreaks

\newcommand{\vx}{\bm{x}}
\newcommand{\vq}{\bm{q}}
\newcommand{\vf}{\bm{f}}
\newcommand{\vg}{\bm{g}}
\newcommand{\vu}{\bm{u}}
\newcommand{\vy}{\bm{y}}
\newcommand{\vh}{\bm{h}}
\newcommand{\vz}{\bm{z}}
\newcommand{\R}{\mathbb{R}}
\newcommand{\mPhi}{\bm{\Phi}}
\newcommand{\mPsi}{\bm{\Psi}}

\DeclareMathOperator{\range}{Range}

\newcommand{\mA}{\bm{A}}
\newcommand{\mG}{\bm{G}}
\newcommand{\mH}{\bm{H}}
\newcommand{\mV}{\bm{V}}
\newcommand{\mB}{\bm{B}}
\newcommand{\mI}{\bm{I}}

\newcommand{\mW}{\bm{W}}

\newcommand{\mC}{\bm{C}}
\newcommand{\mK}{\bm{K}}
\newcommand{\mQ}{\bm{Q}}
\newcommand{\mX}{\bm{X}}
\newcommand{\mY}{\bm{Y}}
\newcommand{\mR}{\bm{R}}
\newcommand{\mS}{\bm{S}}

\newcommand{\mZ}{\bm{Z}}
\newcommand{\mU}{\bm{U}}

\newcommand{\vv}{\bm{v}}

\newcommand{\mL}{\bm{L}}
\newcommand{\mSigma}{\bm{\Sigma}}

\newcommand{\mPi}{\bm{\Pi}}
\newcommand{\mM}{\bm{M}}
\newcommand{\mN}{\bm{N}}
\newcommand{\mT}{\bm{T}}
\newcommand{\vphi}{\bm{\phi}}
\newcommand{\vpsi}{\bm{\psi}}

\newcommand{\va}{\bm{a}}
\newcommand{\vb}{\bm{b}}

\DeclarePairedDelimiterX{\inp}[2]{\langle}{\rangle}{#1, #2}

\DeclareGraphicsExtensions{.pdf,.png}

\makeatletter
\def\thmheadbrackets#1#2#3{%
  \thmname{#1}\thmnumber{\@ifnotempty{#1}{ }\@upn{#2}}%
  \thmnote{ {\the\thm@notefont[#3]}}}
\makeatother

\newtheoremstyle{brakets}
  {}
  {}
  {\itshape}
  {}
  {\bfseries}
  {.}
  { }
  {\thmheadbrackets{#1}{#2}{#3}}

\theoremstyle{brakets}

\newtheorem{theorem}{Theorem}
\newtheorem{definition}{Definition}

\begin{document}
\let\WriteBookmarks\relax
\def\floatpagepagefraction{1}
\def\textpagefraction{.001}

\shorttitle{}    

\shortauthors{}  

\title [mode = title]{Non-Intrusive Balancing for Data-Driven Model Reduction with Oblique Projections}



%

\author[1]{Alberto Padovan}



\ead{alberto.padovan@njit.edu}

\credit{Conceptualization, Methodology, Software, Funding Acquisition}

\affiliation[1]{organization={Department of Mechanical and Industrial Engineering, New Jersey Institute of Technology},
            addressline={200 Central Ave.}, 
            city={Newark},
         citysep={}, 
            postcode={07114}, 
            state={NJ},
            country={USA}}




\begin{abstract}
Non-intrusive reduced-order modeling is essential for systems simulated with black- or gray-box solvers, yet computing accurate data-driven models remains challenging when the dynamics exhibit large-amplitude transient energy growth. This is because state-of-the-art non-intrusive methods typically rely on orthogonal projection onto high-energy linear or nonlinear manifolds identified by minimizing static reconstruction losses, which truncates low-energy, yet dynamically important, states. Intrusive methods such as Balanced Truncation and its extensions address this issue by using adjoint information to define dynamics-aware oblique projections. Here, we show that balancing-based oblique projections for nonlinear systems can instead be computed non-intrusively from forward simulation data. Specifically, we develop a framework to compute the nonlinear empirical reachability Gramian and balance it against its observability counterpart, projected onto a suitable restriction subspace. This procedure provably recovers the oblique projection of Balanced Truncation when the dynamics are linear and the range of either Gramian is contained in the restriction subspace. Moreover, in the linear case and in the absence of adjoint information, we show that the minimax-optimal restriction subspace spans the dominant eigenspaces of the reachability Gramian, which informs an efficient balancing algorithm with cost comparable to a standard SVD. Once the oblique projection is computed, we use Operator Inference to learn the latent-space dynamics on its range. The method is demonstrated on several examples of increasing complexity, including the three-dimensional incompressible flow past a wing with $8\times 10^7$ degrees of freedom. In all examples, the Balanced Operator Inference models significantly outperform their POD-based counterparts, reducing forecasting error by up to two orders of magnitude.
\end{abstract}



\begin{keywords}
Non-intrusive model reduction \sep Oblique projections \sep Balanced Truncation \sep Operator Inference\sep
\end{keywords}

\maketitle

\section{Introduction}
\label{sec:intro}

Model reduction via non-intrusive algorithms is essential to enable complex engineering tasks such as control, optimization, forecasting, and uncertainty quantification for high-dimensional physical systems that are simulated via black- or gray-box solvers with proprietary source codes.
While non-intrusive reduced-complexity methods have seen a rapid growth over the past several years, computing dynamically-accurate models remains an open challenge, especially for systems that exhibit large deviations from underlying attractors when excited by external disturbances.
It is well known that the solution to this problem lies in the identification of appropriate latent-space coordinates via dynamics-aware encoders capable of capturing the sensitivity mechanisms responsible for the transient dynamics of the original system.
However, such mechanisms are typically exposed through adjoint calculations that are inherently intrusive and thus limited to open-source solvers with linearization/automatic-differentiation capabilities.
In non-intrusive settings, where adjoint information is unavailable, the question of how to identify dynamically-appropriate coordinates remains largely open and is the focus of this work.

We begin by reviewing the state of the art of both intrusive and non-intrusive model reduction methods. 
Most frameworks rely on three ingredients: a map from the high-dimensional state space to a low-dimensional latent space (i.e., an encoder), a map from the latent space to the original state space (i.e., a decoder), and intrinsic dynamics that evolve the low-dimensional state in the latent space.
Classical Petrov-Galerkin models share this architecture, where the encoder $\mPsi^\top$ and the decoder $\mPhi(\mPsi^\top\mPhi)^{-1}$ are linear and define an oblique projection $\mathbb{P} = \mPhi(\mPsi^\top\mPhi)^{-1}\mPsi^\top$, while the latent-space dynamics are obtained by projecting the full-order dynamics onto the range of $\mPhi$ along directions orthogonal to the range of $\mPsi$ \citep{rowley_model_2017}.
When $\mPhi = \mPsi$, the projection is orthogonal and the model is known as a Galerkin model.
In the simplest of cases, $\mPhi$ is given by the leading POD modes of a representative data set \citep{lumley1967structure,berkoozProperOrthogonalDecomposition1993,holmes2012turbulence}, and POD-Galerkin models have been used extensively in fluid mechanics \citep{rowley_model_2004,baroneStableGalerkinReduced2009}.
These models, however, are known to perform poorly in systems that exhibit large transient energy growth in response to external disturbances.
This loss of performance is due to the fact that the energy-based truncation of the state space performed via POD may accidentally discard low-energy directions that are nonetheless important for forecasting, and that are responsible for the transient growth behavior exhibited by the original full-order model.
In linear systems, or in nonlinear systems that evolve in the proximity of a steady state, this issue can be addressed using Balanced Truncation \citep{Moore1981principal} or its approximations \citep{willcox2002balanced,rowley2005model}, where dynamics-aware oblique projections are computed by simultaneously diagonalizing the reachability and observability Gramians of the underlying linear system.
Balancing has also been extended to periodically time-varying systems \citep{sandberg2004balanced,padovan2024jcp} and nonlinear systems \citep{SCHERPEN1993,kramer2023nonlinearbalancedtruncation2}, although nonlinear balancing remains limited to relatively low-dimensional settings due to the curse of dimensionality associated with the computation of the controllability and observability energy functions \citep{KRAMER2024117011}.
Other popular approaches compute oblique projections by minimizing the $\mathcal{H}_2$ or $\mathcal{H}_\infty$ norm of the error between the full-order and reduced-order transfer functions \citep{antoulas2020interpolatory}, by solving least-squares problems at each time step as in Least-Squares Petrov-Galerkin \citep{carlberg2011efficient,carlberg2017galerkin}, by balancing state and gradient covariances \citep{otto2023cobras,Zanardi2025,Zanardi2026}, or by optimizing Petrov-Galerkin models against high-fidelity data \citep{otto2022troop}.
All these methods are, however, intrusive, as they require access to the full-order dynamics and, crucially, to their linearization and its adjoint.
As previously mentioned, many high-fidelity commercial and legacy solvers across engineering disciplines do not have linearization and adjoint capabilities, thus limiting the applicability of intrusive methods.

On the non-intrusive front, a popular framework for learning continuous-time reduced-order models is Operator Inference \citep{peherstorfer_data-driven_2016,kramer_learning_2024}, which fits polynomial latent-space dynamics to data by solving a linear least-squares problem.
Operator Inference has been extended to systems with non-polynomial nonlinearities \citep{bennerOperatorInferenceNonintrusive2020}, systems with Lagrangian or Hamiltonian structure \citep{sharma2022hamiltonian,sharma2024lagrangian,gruber2025variationally}, systems exhibiting continuous symmetries \citep{shuai2025}, and models that are globally asymptotically stable by construction \citep{goyal_guaranteed_2025,gkimisisRepresentationEnergypreservingQuadratic2026}.
Despite these developments, Operator Inference almost invariably achieves dimensionality reduction via orthogonal projection onto the span of the leading POD modes, and it therefore inherits the shortcomings of POD-Galerkin models when applied to systems with large-amplitude transient growth.
Notable exceptions include the work of \cite{gruber2025variationally} in the context of Hamiltonian dynamics, where the Operator Inference problem is closed via Petrov-Galerkin projection in order to guarantee variational consistency of the resulting model, and the more recent work of \cite{rettberg2026petrov} in the context of linear port-Hamiltonian systems, where the authors designed a structure-aware Petrov-Galerkin formulation by endowing the ambient space with a non-Euclidean Hamiltonian-informed inner product.
In order to address the problem of learning dynamics-aware oblique projections for general nonlinear systems, the recently introduced NiTROM formulation \citep{padovan2024nitrom,errico2026gasnitrom} poses a trajectory-based optimization problem where the projectors and the (possibly globally-asymptotically-stable) latent-space dynamics are learned simultaneously.
This formulation has been shown to achieve significant improvements over POD-based formulations on several fluid flows, but the optimization problem is strongly non-convex and thus requires careful initialization.
Along the lines of non-intrusive, dynamics-aware projections, we also find the recent work of \cite{bettini2025datadriven,bettini2026general}, where reduced-order models are obtained by projecting the state onto spectral submanifolds along their stable invariant fibers (or linear approximations thereof).
Similarly to the NiTROM formulation, this approach relies on solving a non-convex optimization problem to simultaneously identify an appropriate encoder and intrinsic dynamics of polynomial form.
Additionally, in \cite{bettini2026general}, the learning problem is formulated using delay embeddings of a scalar-valued observable of the state, and it is unclear how the method might scale in high dimensions if one chooses to observe the whole state.

In this paper, we propose a non-intrusive and \emph{optimization-free} procedure to compute balancing-based oblique projections for nonlinear systems from forward simulation data alone.
The procedure builds on the empirical Gramians of \cite{Lall1999}, and it rests on the observation that while the computation of the empirical observability Gramian as defined in \cite{Lall1999} is intractable in high dimensions, its restriction to a suitably chosen subspace can be computed inexpensively from the outputs of a few forward simulations.
This leads to the definition of a \emph{projected} empirical observability Gramian which, once balanced against the empirical reachability Gramian, yields the desired dynamics-aware encoder-decoder pairs.
We show that this procedure provably recovers the oblique projection defined via classical Balanced Truncation \citep{Moore1981principal} when the dynamics are linear and when the range of either the reachability or the observability Gramian is contained in the aforementioned restriction subspace.
When neither range is contained in this subspace, we quantify the deviation from Balanced Truncation in terms of the Hankel singular values and of the principal angles between the trial and test subspaces, measured in the geometry induced by the reachability Gramian.
Furthermore, in the linear case and in the absence of adjoint information, we show that under a notion of minimax optimality the restriction subspace should span the dominant eigenspaces of the reachability Gramian, and this result informs the development of an efficient, data-driven algorithm to perform non-intrusive balancing of the empirical Gramians.
We emphasize that the cost of this algorithm is comparable to that of a standard singular value decomposition (SVD) of the training data, both in terms of storage and CPU/GPU time.
Once the desired balanced encoder-decoder pairs are computed, we learn latent-space dynamics via Operator Inference \citep{peherstorfer_data-driven_2016} or its globally-asymptotically-stable variant \citep{goyal_guaranteed_2025}.
We remark, however, that any other available method for learning latent-space dynamics can be used, including Neural ODEs \citep{chen2018neuralode}, SINDy \citep{brunton2016sindy}, and several others.
We demonstrate the proposed method on several examples of increasing size and complexity. These include the complex Ginzburg-Landau equation in both stable and unstable configurations, the two-dimensional flow past an airfoil, and the three-dimensional flow past a finite wing with approximately $8\times 10^7$ degrees of freedom.
Throughout, the proposed method achieves significant improvements over POD-based Operator Inference.

The manuscript is organized as follows.
In Section~\ref{sec:math_formulation}, we introduce the data-driven balancing procedure, we analyze its connection with Balanced Truncation in the linear setting, we present the checkpointing algorithm, and we summarize the Operator Inference procedure used to learn the latent-space dynamics.
The numerical examples are presented in Section~\ref{sec:results}, and conclusions are drawn in Section~\ref{sec:conclusions}.

\section{Mathematical Formulation}
\label{sec:math_formulation}

We begin by considering a nonlinear full-order model (FOM) governed by
\begin{equation}
  \label{eq:nonlinear_fom}
  \begin{aligned}
    \frac{d}{dt}\vx(t) &= \vf(\vx(t),\vu(t)),\quad \vx(0) = \vx_0, \\
    \vy(t) &= \vh(\vx(t)),
  \end{aligned}
\end{equation}
where $\vx(t)\in\R^n$ is the state, $\vu(t)\in\R^m$ is the input, and  $\vy(t)\in\R^p$ is the output. 
Throughout this paper, we seek reduced-order models (ROMs) capable of accurately predicting the output $\vy(t)$ across a wide range of inputs $\vu(t)$ and initial conditions $\vx_0$.
We restrict our attention to models of the form
\begin{equation}
  \label{eq:nonlinear_rom}
  \begin{aligned}
    \frac{d}{dt}\vz(t) &= \vf_r(\vz(t),\vu(t)),\quad \vz(0) = \vz_0 \coloneqq \mPsi^\top\vx_0, \\
    \hat\vy(t) &= \vh\left(\mPhi \left(\mPsi^\top \mPhi\right)^{-1}\vz(t)\right),
  \end{aligned}
\end{equation}
where $\vz(t)\in\R^r$, with $r\ll n$, is the reduced state, and $\hat\vy(t)$ is the predicted output.
The model parameters, which we intend to learn from high-fidelity data sampled from \eqref{eq:nonlinear_fom}, are the nonlinear reduced-order dynamics $\vf_r$, the encoder $\mPsi^\top:\R^n\to\R^r$, and the decoder $\mPhi\left(\mPsi^\top \mPhi\right)^{-1}:\R^r\to V$, where $V = \range(\mPhi)$ is an~$r$-dimensional subspace of $\R^n$.
We emphasize that the encoder and decoder are linear and that the encoder is a left-inverse of the decoder, so that the decoder-encoder pair defines a rank-$r$ linear projection operator $\mathbb{P}\coloneqq\mPhi\left(\mPsi^\top \mPhi\right)^{-1}\mPsi^\top : \R^n \to V$.
We note in passing that with this construction we immediately recover a classical Petrov-Galerkin ROM if $\vf_r(\vz(t),\vu(t)) = \mPsi^\top\vf(\mPhi(\mPsi^\top \mPhi)^{-1}\vz(t),\vu(t))$.
Additionally, if $\range(\mPsi) = \range(\mPhi)$, then $\mathbb{P}$ is an orthogonal projector, and projecting the full-order dynamics \eqref{eq:nonlinear_fom} would lead to a classical Galerkin ROM.

In order to learn the model parameters, we propose a two-stage learning framework, whereby the encoder and decoder are learned via a novel data-driven balancing procedure, and the latent-space dynamics are learned via Operator Inference.
We remark that the focus of this paper is on the balancing procedure used to identify dynamics-aware encoder-decoder pairs, and the latent-space dynamics (or solution operator) can then be inferred using any available method in the literature.
This section is structured as follows.
In Subsection~\ref{subsec:learning_encdec}, we describe the balancing procedure. In particular, we start by specializing to linear systems in order to draw a parallel with the classical Balanced Truncation approach, and then we present the general nonlinear formulation that is used in practice.
In Subsection~\ref{subsec:learning_lsd}, we describe the Operator Inference procedure of \cite{peherstorfer_data-driven_2016} to learn the latent-space dynamics.

\subsection{Learning Linear Encoder/Decoder Pairs}
\label{subsec:learning_encdec}

Here, we describe a data-driven procedure to learn the encoder and decoder matrices $\mPsi$ and $\mPhi$ from high-fidelity data.
While the procedure is designed for general nonlinear dynamics (Section~\ref{subsubsec:data_driven_nonlinear_balancing}), it is instructive to first examine its specialization to linear dynamics (Section~\ref{subsubsec:data_driven_linear_balancing}), where the empirical Gramians reduce to the classical Gramians of Balanced Truncation \citep{Moore1981principal,willcox2002balanced,rowley2005model}, and where the procedure admits a transparent interpretation and rigorous error bounds.
Finally, we present a computationally-efficient algorithm to compute the desired trial and test bases (Section~\ref{subsubsec:computational_algorithm}).

\subsubsection{Non-Intrusive Balancing for Linear Dynamics}
\label{subsubsec:data_driven_linear_balancing}

We begin by considering linear full-order dynamics and output maps of the form $\vf(\vx(t),\vu(t)) = \mA\vx(t) + \mB\vu(t)$ and $\vh(\vx(t)) = \mC\vx(t)$, so that the full-order model in \eqref{eq:nonlinear_fom} is a linear time-invariant input-output system.
We also assume that the system is asymptotically stable, so that the infinite-time Gramians (which we will define shortly) are well-defined.
Now, let us suppose that we have collected $m$ linear impulse responses $\{\vx^{(j)}(t)\} = \{e^{\mA t}\mB_j\}$, where $\mB_j$ is the $j$th column of $\mB$.
The infinite-time reachability Gramian associated with the linear dynamics may then be defined as
\begin{equation}
  \label{eq:linear_WR}
  \mW_R = \sum_{j=1}^m \int_0^\infty \vx^{(j)}(t) \vx^{(j)}(t)^\top dt = \sum_{j=1}^m \int_0^\infty e^{\mA t}\mB_j\mB_j^\top e^{\mA^\top t} dt = \int_0^\infty e^{\mA t}\mB\mB^\top e^{\mA^\top t} dt.
\end{equation}
An analogous definition of the observability Gramian would normally require collecting impulse responses from the adjoint system, which is not available in a non-intrusive setting such as the one considered here.
We then ask whether any observability information can be extracted from forward trajectory data.
To do so, consider the data set $\{\vy^{(j)}(t)\} = \{\mC e^{\mA t}\vx_0^{(j)}\}$, with $j\in\{1,2,\ldots,l\}$, where $\vx_0^{(j)}$ are $l$ initial conditions that we can choose freely.
Choosing such initial conditions is a design choice that we will discuss in more detail in Section~\ref{subsubsec:computational_algorithm}.
Then, consider the $l\times l$ Gram matrix associated with the measured outputs $\vy(t)$,
\begin{equation}
  \mG_{i,j} = \int_0^\infty \vy^{(i)}(t)^\top \vy^{(j)}(t) dt = \vx_0^{(i)\top}\underbrace{\int_0^\infty e^{\mA^\top t}\mC^\top\mC e^{\mA t} dt}_{\mW_O}\, \vx_0^{(j)} = \vx_0^{(i)\top} \mW_O \vx_0^{(j)}\quad \Rightarrow\quad \mG = \mX_0^\top \mW_O\mX_0,
\end{equation}
where $\mX_0$ is the $n\times l$ matrix whose columns are the initial conditions $\vx_0^{(j)}$.
Interestingly, we see that $\mG$ is the compressed representation of the observability Gramian $\mW_O$ on the range of $\mX_0$.
Indeed, under the assumption that $\mX_0$ is full column rank, we may then define a projected observability Gramian 
\begin{equation}
  \label{eq:linear_WO}
  \widetilde{\mW}_O = \mX_0\left(\mX_0^\top \mX_0\right)^{-1}\mG\left(\mX_0^\top \mX_0\right)^{-1}\mX_0^\top = \mathbb{P}_{\mX_0} \mW_O \mathbb{P}_{\mX_0},\quad \mathbb{P}_{\mX_0} = \mX_0\left(\mX_0^\top \mX_0\right)^{-1}\mX_0^\top.
\end{equation}
At this point, the reachability and (projected) observability Gramians can be balanced using an approach analogous to the one described in \cite{rowley2005model}.
Specifically, let $\mW_R = \mX \mX^\top$ and $\widetilde{\mW}_O = \widetilde{\mY}\widetilde{\mY}^\top$ be factorizations of the reachability and projected observability Gramians, respectively, with $\widetilde{\mY} = \mX_0\left(\mX_0^\top \mX_0\right)^{-1} \mL$, where $\mL$ is the Cholesky factor of~$\mG$.
Then, a balancing transformation may be computed from the singular value decomposition of $\widetilde{\mY}^\top \mX = \widetilde{\mU} \widetilde{\mSigma} \widetilde{\mV}^\top$, yielding the desired matrices
\begin{equation}
  \label{eq:PhiPsi}
  \widetilde{\mPhi} = \mX \widetilde{\mV}_r \widetilde{\mSigma}_r^{-1/2},\quad \widetilde{\mPsi} = \widetilde{\mY} \widetilde{\mU}_r \widetilde{\mSigma}_r^{-1/2},
\end{equation}
each of dimension $n \times r$, where $\widetilde{\mV}_r$ denotes the first $r$ columns of $\widetilde{\mV}$.
The tildes are used to emphasize the fact that the factors come from balancing the reachability and \emph{projected} observability Gramians.
This distinction will be important in the discussion that follows, where we will compare the results of this procedure with those of classical Balanced Truncation, whose factors do not carry tildes.
The theorem below establishes conditions under which the two procedures are equivalent, and it provides bounds on the deviation from Balanced Truncation when such conditions are not met.

\begin{definition}
  For matrices $\mM$ and $\mN$ with full column rank, $\theta_{\max}(\mM,\mN)$ denotes the largest principal angle between $\range(\mM)$ and $\range(\mN)$, and
  \begin{equation}
    \sin\theta_{\max}(\mM,\mN) = \big\lVert\left(\mI - \mathbb{P}_{\mM}\right)\mN\left(\mN^\top\mN\right)^{-1/2}\big\rVert,\quad \mathbb{P}_{\mM}\coloneqq\mM\left(\mM^\top\mM\right)^{-1}\mM^\top.
  \end{equation}
  Moreover, given a symmetric positive-definite matrix $\mS$, we let $\theta^{\mS}_{\max}(\mM,\mN)\coloneqq\theta_{\max}(\mS^{1/2}\mM,\mS^{1/2}\mN)$ denote the largest principal angle between $\range(\mM)$ and $\range(\mN)$ measured in the inner product $\inp{\va}{\vb}_{\mS}\coloneqq\va^\top\mS\vb$.
\end{definition}

\begin{theorem}
  \label{thm:balanced_truncation_bounds}
  Let $\mW_R = \mX \mX^\top$, $\mW_O = \mY \mY^\top$, and $\widetilde{\mW}_O = \widetilde{\mY}\widetilde{\mY}^\top$ be factorizations of the reachability, observability, and projected observability Gramians, respectively, with $\widetilde{\mY} = \mPi\mY$ and $\mPi \coloneqq \mathbb{P}_{\mX_0}$, so that $\widetilde{\mW}_O = \mPi\mW_O\mPi$.
  Likewise, let $\mH = \mY^\top \mX$ and $\widetilde{\mH} = \widetilde{\mY}^\top \mX$ be the corresponding Hankel matrices, and let $\sigma_k(\mH)$ and $\sigma_k(\widetilde{\mH})$ denote their singular values.
  With $\delta^2_R \coloneqq \lVert \mPi_\perp \mW_R \mPi_\perp\rVert$ and $\delta^2_O \coloneqq \lVert \mPi_\perp \mW_O \mPi_\perp\rVert$, where $\mPi_\perp = \mI - \mPi$ and $\lVert \cdot\rVert$ denotes the spectral norm, the following inequality holds
  \begin{equation}
    \label{eq:weyl_bound}
    \big\lvert\sigma_k\left(\widetilde{\mH}\right) - \sigma_k\left(\mH\right)\big\rvert \leq \delta_O\delta_R,\quad \forall k \in \{1,2,\ldots,\text{rank}(\mH)\}.
  \end{equation}
  Moreover, let $\mH = \mU\mSigma\mV^\top$ and $\widetilde{\mH} = \widetilde{\mU}\widetilde{\mSigma}\widetilde{\mV}^\top$ be singular value decompositions, and define the Balanced Truncation trial and test matrices as
  \begin{equation}
    \mPhi = \mX\mV_r\mSigma_r^{-1/2},\quad \mPsi = \mY\mU_r\mSigma_r^{-1/2},
  \end{equation}
  where the subscript $r$ denotes truncation to rank $r$, and let $\widetilde{\mPhi}$ and $\widetilde{\mPsi}$ be defined as in \eqref{eq:PhiPsi}.
  Then, provided $\varepsilon\coloneqq \sigma_r(\widetilde{\mH}) - \sigma_{r+1}(\mH) > 0$, and assuming that~$\mW_R$ is positive definite, we have
  \begin{equation}
    \label{eq:sin_theta_bound}
    \sin\theta^{\mW_R^{-1}}_{\max}\left(\mPhi,\widetilde{\mPhi}\right) = \sin\theta_{\max}\left(\mV_r,\widetilde{\mV}_r\right) \leq \frac{\delta_O\delta_R}{\varepsilon},\quad \sin\theta^{\mW_R}_{\max}\left(\mPsi,\widetilde{\mPsi}\right) = \sin\theta_{\max}\left(\mV_r,\widetilde{\mV}_r\right) \leq \frac{\delta_O\delta_R}{\varepsilon}.
  \end{equation}
\end{theorem}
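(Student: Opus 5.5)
The plan is to treat $\widetilde{\mH}$ as an additive perturbation of $\mH$, apply Weyl's inequality and Wedin's $\sin\Theta$ theorem, and then transfer angles between right singular subspaces to angles between $\mPhi,\widetilde{\mPhi}$ and $\mPsi,\widetilde{\mPsi}$ through a partial isometry built from $\mX$.

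\emph{Perturbation bound.} Since $\widetilde{\mY}=\mPi\mY$ and $\mPi$ is symmetric, we have $\widetilde{\mH}=\mY^\top\mPi\mX$. Hence
\begin{equation*}
  \mE \coloneqq \widetilde{\mH}-\mH = -\mY^\top\mPi_\perp\mX = -(\mPi_\perp\mY)^\top(\mPi_\perp\mX),
\end{equation*}
where the last step uses that $\mPi_\perp$ is an idempotent symmetric projector. Submultiplicativity then gives $\lVert\mE\rVert\le\lVert\mPi_\perp\mY\rVert\,\lVert\mPi_\perp\mX\rVert$. Moreover $\lVert\mPi_\perp\mX\rVert^2=\lVert\mPi_\perp\mX\mX^\top\mPi_\perp\rVert=\delta_R^2$, and likewise $\lVert\mPi_\perp\mY\rVert=\delta_O$. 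So $\lVert\mE\rVert\le\delta_O\delta_R$, and Weyl's inequality for singular values, $\lvert\sigma_k(\mH+\mE)-\sigma_k(\mH)\rvert\le\lVert\mE\rVert$, yields \eqref{eq:weyl_bound} for every $k$.

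\emph{Subspace bound.} I would invoke Wedin's $\sin\Theta$ theorem in the form that compares the dominant right singular subspace of $\widetilde{\mH}$ (rank $r$) with that of $\mH$. Its gap hypothesis is exactly $\sigma_r(\widetilde{\mH})-\sigma_{r+1}(\mH)=\varepsilon>0$, and its conclusion is
\begin{equation*}
  \sin\theta_{\max}(\mV_r,\widetilde{\mV}_r)\le\frac{\max\bigl(\lVert\mE\widetilde{\mV}_r\rVert,\lVert\mE^\top\widetilde{\mU}_r\rVert\bigr)}{\varepsilon}.
\end{equation*}
Both residual norms are at most $\lVert\mE\rVert\le\delta_O\delta_R$, which gives the inequalities in \eqref{eq:sin_theta_bound}. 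The step I expect to need the most care is matching the exact form of Wedin's theorem, in particular which matrix carries $\sigma_r$ and which carries $\sigma_{r+1}$, and checking that $\mSigma_r$ is invertible so that $\mPhi$ and $\mPsi$ are well defined. The latter I would take as implicit in the definition of Balanced Truncation; if needed, I would argue $\sigma_r(\mH)>0$ from the Weyl bound.

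\emph{Geometric identities.} Let $\mQ\coloneqq\mW_R^{-1/2}\mX$. Since $\mW_R$ is positive definite, $\mQ\mQ^\top=\mI$, so $\mQ^\top$ is an isometry and $\mQ$ acts isometrically on $\range(\mX^\top)=\range(\mQ^\top)$.

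For the trial bases, $\mW_R^{-1/2}\mPhi=\mQ\mV_r\mSigma_r^{-1/2}$ and $\mW_R^{-1/2}\widetilde{\mPhi}=\mQ\widetilde{\mV}_r\widetilde{\mSigma}_r^{-1/2}$. The columns of $\mV_r$ and $\widetilde{\mV}_r$ lie in $\range(\mH^\top)$ and $\range(\widetilde{\mH}^\top)$, both of which are contained in $\range(\mX^\top)$. The column scalings do not change the ranges, and the restricted isometry $\mQ$ preserves principal angles. Hence $\theta^{\mW_R^{-1}}_{\max}(\mPhi,\widetilde{\mPhi})=\theta_{\max}(\mV_r,\widetilde{\mV}_r)$.

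For the test bases, note that $\mQ\mX^\top=\mW_R^{-1/2}\mX\mX^\top=\mW_R^{1/2}$. Using $\mX^\top\mY\mU_r=\mH^\top\mU_r=\mV_r\mSigma_r$, we get $\mW_R^{1/2}\mPsi=\mQ\mV_r\mSigma_r^{1/2}$. Analogously, $\mX^\top\widetilde{\mY}\widetilde{\mU}_r=\widetilde{\mH}^\top\widetilde{\mU}_r=\widetilde{\mV}_r\widetilde{\mSigma}_r$ gives $\mW_R^{1/2}\widetilde{\mPsi}=\mQ\widetilde{\mV}_r\widetilde{\mSigma}_r^{1/2}$. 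By the same isometry argument, $\theta^{\mW_R}_{\max}(\mPsi,\widetilde{\mPsi})=\theta_{\max}(\mV_r,\widetilde{\mV}_r)$, which completes \eqref{eq:sin_theta_bound}.
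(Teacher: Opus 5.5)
Your proposal is correct and follows the paper's proof essentially step for step: Weyl's inequality with $\lVert\mY^\top\mPi_\perp\mX\rVert\le\lVert\mPi_\perp\mY\rVert\,\lVert\mPi_\perp\mX\rVert=\delta_O\delta_R$, the residual form of Wedin's theorem with both residuals bounded by $\lVert\widetilde{\mH}-\mH\rVert$, and the partial isometry $\mW_R^{-1/2}\mX$ (the paper's $\mT$) combined with $\mX^\top\mY\mU_r=\mV_r\mSigma_r$ to transfer the angle to $\mPhi$ and $\mPsi$. One small correction to your side remark: the Weyl bound alone does not give $\sigma_r(\mH)>0$, since $\delta_O\delta_R$ may exceed $\sigma_r(\widetilde{\mH})$; it does follow from a rank argument, because $\mW_R$ positive definite makes $\mX$ full row rank, so $\operatorname{rank}(\widetilde{\mH})=\operatorname{rank}(\mY^\top\mPi)\le\operatorname{rank}(\mY)=\operatorname{rank}(\mH)$, and then $\sigma_r(\widetilde{\mH})>\sigma_{r+1}(\mH)\ge 0$ forces $\sigma_r(\mH)>0$.
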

\begin{proof}
  The proof relies on Weyl's inequality and Wedin's $\sin\theta$ theorem, and it is provided in Appendix~\ref{sec:proof_of_thm_1}.
\end{proof}

In words, the first half of the theorem establishes that the singular values of the Hankel matrix $\widetilde{\mH}$ computed from the projected observability Gramian are close to those of the Hankel matrix $\mH$ computed from the full observability Gramian, with a bound that depends on how well the reachability and observability Gramians are captured by the projection onto the range of $\mX_0$.
The second half establishes that a similar quantity bounds the deviation between the trial and test subspaces identified by the classical and projected Balanced Truncation procedures.
These deviations are measured in the geometry induced by the Gramians themselves.
Specifically, the norm $\lVert\vx\rVert_{\mW_R^{-1}}^2 = \vx^\top\mW_R^{-1}\vx$ is the minimum input energy required to steer the system from the origin to the state $\vx$ \citep{hespanha2018linear}, and in this geometry the trial subspaces deviate exactly as much as the right singular subspaces of the Hankel matrices.
The test subspaces, which define the encoder $\vx\mapsto\widetilde{\mPsi}^\top\vx$ acting on states, are instead measured in the geometry induced by~$\mW_R$, and they deviate by exactly the same angle.
This geometry is meaningful because the norm $\lVert\cdot\rVert_{\mW_R}$ is dual to the minimum-energy norm $\lVert\cdot\rVert_{\mW_R^{-1}}$, so that for any row $\widetilde{\vpsi}^\top$ of the encoder we have, by definition,
\begin{equation}
  \lVert\widetilde{\vpsi}\rVert_{\mW_R} = \max_{\lVert\vx\rVert_{\mW_R^{-1}}\leq 1}\ \widetilde{\vpsi}^\top\vx.
\end{equation}
That is, the norm of $\widetilde{\vpsi}$ in this geometry is the maximum value that the corresponding reduced coordinate can take among all states $\vx$ that are reachable with unit input energy.
Consequently, two encoders that are close in this geometry produce nearly identical reduced coordinates on strongly reachable states, and their differences are confined to weakly reachable (or unreachable) states that therefore contribute little to the input-output map.

The bounds in the theorem are particularly useful not only because they provide a quantitative measure of the accuracy of the projected model, but also because they can be leveraged to inform the choice of initial conditions $\vx_0^{(j)}$ used to compute the projected observability Gramian.
In particular, it is revealing to observe that the bound $\varepsilon^{-1}\delta_O \delta_R$ vanishes when either the range of the reachability Gramian $\mW_R$ or the range of the observability Gramian $\mW_O$ is contained within that of $\mX_0$ (i.e., when $\delta_R = 0$ or $\delta_O = 0$, respectively).
While the range of the observability Gramian is generally not known in a non-intrusive setting, the range of the reachability Gramian is well approximated by the span of the impulse responses $\{\vx^{(j)}(t)\}$, which are already available from the data collection stage.
The theorem therefore suggests a sampling strategy of $\mX_0$ along impulse responses.
In order to formalize this intuition, we use Theorem~\ref{thm:minimax} below to show that, among all possible choices of $\mX_0$ with rank at most $\ell$, the one that minimizes the maximum error between the Hankel matrices $\mH$ and $\widetilde{\mH}$ is obtained by letting the range of $\mX_0$ coincide with the dominant $\ell$-dimensional eigenspace of $\mW_R$.
Since the dominant eigenspace of $\mW_R$ is (approximately) contained in the range of the impulse responses (this can be seen directly from equation \eqref{eq:linear_WR}), this result provides a theoretical motivation for the numerical algorithm described in Section~\ref{subsubsec:computational_algorithm}.

\begin{theorem}\label{thm:minimax}
Let $\mW_R = \mX\mX^\top$ be positive definite with eigenvalues $\lambda_1\geq\dots\geq\lambda_n$ and eigenvectors $\vu_1,\dots,\vu_n$, let $\widetilde{\mH}$ denote the Hankel matrix computed from the projected observability Gramian associated with $\mX_0$, and let $\mathcal{O}_\gamma := \{\mW_O\succeq 0 : \|\mW_O\|\leq\gamma\}$. Then
\begin{equation}
\min_{\mathrm{rank}(\mX_0)\leq \ell}\;\sup_{\mW_O\in\mathcal{O}_\gamma}\;\big\|\mH-\widetilde{\mH}\big\| = \gamma^{1/2}\lambda_{\ell+1}^{1/2},
\end{equation}
and the minimum is attained by any $\mX_0$ with $\range(\mX_0) = \mathrm{span}\{\vu_1,\dots,\vu_\ell\}$.
\end{theorem}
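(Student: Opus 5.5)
The plan is to reduce the min--max problem to a statement about orthogonal projectors and then apply the Courant--Fischer theorem. For rank-deficient $\mX_0$, I interpret $\mPi$ as the orthogonal projector onto $\range(\mX_0)$. First, since $\widetilde{\mY}=\mPi\mY$, we have
\[
\mH-\widetilde{\mH}=\mY^\top\mX-\mY^\top\mPi\mX=\mY^\top\mPi_\perp\mX .
\]
The spectral norm of this matrix does not depend on which factor $\mY$ of $\mW_O$ is used, because
\[
\big\|\mY^\top\mPi_\perp\mX\big\|^2=\big\|\mX^\top\mPi_\perp\mW_O\mPi_\perp\mX\big\|.
\]
This identity makes the supremum over $\mathcal{O}_\gamma$ well posed, and I would state this invariance explicitly as the first step.

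Second, I would evaluate the inner supremum exactly for a fixed $\mX_0$. For the upper bound, $\mW_O\preceq\gamma\mI$ gives
\[
\mX^\top\mPi_\perp\mW_O\mPi_\perp\mX\preceq\gamma\,\mX^\top\mPi_\perp\mX,
\]
so $\|\mH-\widetilde{\mH}\|^2\le\gamma\|\mPi_\perp\mX\|^2=\gamma\|\mPi_\perp\mW_R\mPi_\perp\|$. For attainment, let $\vv$ be a unit top left singular vector of $\mPi_\perp\mX$. It lies in $\range(\mPi_\perp)$, so $\mPi_\perp\vv=\vv$. Choosing $\mW_O=\gamma\vv\vv^\top\in\mathcal{O}_\gamma$ then gives
\[
\big\|\mX^\top\mPi_\perp\mW_O\mPi_\perp\mX\big\|=\gamma\,\|\mX^\top\vv\|^2=\gamma\,\big\|\mX^\top\mPi_\perp\vv\big\|^2=\gamma\,\|\mPi_\perp\mX\|^2 .
\]
Hence
\[
\sup_{\mW_O\in\mathcal{O}_\gamma}\|\mH-\widetilde{\mH}\|=\gamma^{1/2}\,\big\|\mPi_\perp\mW_R\mPi_\perp\big\|^{1/2}=\gamma^{1/2}\delta_R .
\]
This makes the upper bound of Theorem~\ref{thm:balanced_truncation_bounds} tight in the worst case.

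Third, I would minimize $\|\mPi_\perp\mW_R\mPi_\perp\|=\max_{\vv\in\range(\mPi_\perp),\,\|\vv\|=1}\vv^\top\mW_R\vv$ over all subspaces $\range(\mX_0)$ of dimension at most $\ell$. For the lower bound, $\range(\mPi_\perp)$ has dimension at least $n-\ell$, so it meets $\mathrm{span}\{\vu_1,\dots,\vu_{\ell+1}\}$ nontrivially. Any unit $\vv$ in that intersection satisfies $\vv^\top\mW_R\vv\ge\lambda_{\ell+1}$; this is exactly Courant--Fischer. For attainment, take $\range(\mX_0)=\mathrm{span}\{\vu_1,\dots,\vu_\ell\}$. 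Then
\[
\mPi_\perp\mW_R\mPi_\perp=\sum_{k>\ell}\lambda_k\vu_k\vu_k^\top,
\]
whose norm is $\lambda_{\ell+1}$. Combining this with the second step yields the claimed value $\gamma^{1/2}\lambda_{\ell+1}^{1/2}$ and the stated minimizer.

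I expect the only delicate point to be the second step: the supremum must be shown to be attained, not just bounded. The worst-case $\mW_O$ has to be rank one and aligned with the top left singular direction of $\mPi_\perp\mX$, and one must check that this direction lies in $\range(\mPi_\perp)$ so that the projector acts as the identity on it. The remaining steps are standard eigenvalue interlacing arguments. Positive definiteness of $\mW_R$ is used only to order the eigenvalues; the argument does not actually require it.
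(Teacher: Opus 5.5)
Your proof is correct and uses the same ingredients as the paper's:
\begin{itemize}
\item the bound $\|\mY^\top\mPi_\perp\mX\|\le\gamma^{1/2}\|\mPi_\perp\mX\|$, which the paper takes from Theorem~\ref{thm:balanced_truncation_bounds} via $\delta_O\le\gamma^{1/2}$;
\item a rank-one adversary $\mW_O=\gamma\vv\vv^\top$ with $\vv\in\range(\mPi_\perp)$;
\item dimension counting against $\mathrm{span}\{\vu_1,\dots,\vu_{\ell+1}\}$.
\end{itemize}

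What differs is the decomposition. The paper proves the upper bound only at the eigenvector choice. It gets the lower bound for arbitrary $\mX_0$ in one step by placing $\vv$ directly in $\range(\mPi_\perp)\cap\mathrm{span}\{\vu_1,\dots,\vu_{\ell+1}\}$, so that $\widetilde{\mH}=0$; this is shorter. You instead compute the inner supremum exactly for every $\mX_0$, namely $\gamma^{1/2}\delta_R$, by aligning $\vv$ with the top left singular vector of $\mPi_\perp\mX$. After that, the outer problem is exactly the Courant--Fischer min--max characterization of $\lambda_{\ell+1}$.

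Your route gives more than the statement asks for. First, the bound of Theorem~\ref{thm:balanced_truncation_bounds} is attained in the worst case for every restriction subspace, not just the optimal one. Second, the minimizers are exactly the $\mX_0$ with $\|\mPi_\perp\mW_R\mPi_\perp\|=\lambda_{\ell+1}$. That set can be strictly larger than the dominant eigenspace, even with distinct eigenvalues: take $\mW_R=\mathrm{diag}(3,2,1)$, $\ell=1$, and $\range(\mX_0)=\mathrm{span}\{(1,0,1)^\top\}$, for which $\mPi_\perp\mW_R\mPi_\perp$ has norm $2=\lambda_2$.

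Your remark that positive definiteness is not needed is also right, with one trivial caveat. If $\mPi_\perp\mX=0$, a top singular vector need not lie in $\range(\mPi_\perp)$. In that case the supremum is $0$ and is attained by $\mW_O=0$.
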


\begin{proof}
Let $\range(\mX_0) = \mathrm{span}\{\vu_1,\dots,\vu_\ell\}$ and $\mPi := \mathbb{P}_{\mX_0}$. Then $\mPi_\perp\mW_R\mPi_\perp = \sum_{k>\ell}\lambda_k\vu_k\vu_k^\top$, so that $\delta_R^2 = \lambda_{\ell+1}$. Since $\delta_O^2 = \|\mPi_\perp\mW_O\mPi_\perp\|\leq\|\mW_O\|\leq\gamma$, the first half of the proof of Theorem~\ref{thm:balanced_truncation_bounds} gives $\|\mH-\widetilde{\mH}\|\leq\delta_O\delta_R\leq\gamma^{1/2}\lambda_{\ell+1}^{1/2}$ for all $\mW_O\in\mathcal{O}_\gamma$.
This establishes the maximum error attained by letting the range of $\mX_0$ coincide with the dominant $\ell$-dimensional eigenspace of $\mW_R$.
We now show that no other choice of $\mX_0$ with $\mathrm{rank}(\mX_0)\leq \ell$ leads to a smaller maximum error.
Let $\mathrm{rank}(\mX_0)\leq \ell$ and $\mPi := \mathbb{P}_{\mX_0}$. Since $\dim\range(\mPi_\perp)\geq n-\ell$, there exists a unit vector $\vv\in\range(\mPi_\perp)\cap\mathrm{span}\{\vu_1,\dots,\vu_{\ell+1}\}$, so that $\mPi\vv = \mathbf{0}$ and $\vv^\top\mW_R\vv\geq\lambda_{\ell+1}$. Taking $\mW_O = \gamma\vv\vv^\top\in\mathcal{O}_\gamma$, with factor $\mY = \gamma^{1/2}\vv$, we have $\widetilde{\mY}\widetilde{\mY}^\top = \mPi\mW_O\mPi = 0$, so that $\widetilde{\mH} = 0$ and
\begin{equation*}
\big\|\mH-\widetilde{\mH}\big\| = \|\mH\| = \gamma^{1/2}\|\vv^\top\mX\| = \gamma^{1/2}\big(\vv^\top\mW_R\vv\big)^{1/2}\geq\gamma^{1/2}\lambda_{\ell+1}^{1/2}.
\end{equation*}
\end{proof}

\subsubsection{Non-Intrusive Balancing for Nonlinear Dynamics}
\label{subsubsec:data_driven_nonlinear_balancing}

We now present the procedure in its general form, which applies to nonlinear dynamics and follows the ``Empirical Gramians'' paradigm of \cite{Lall1999}.
For notational simplicity, we shall consider control-affine nonlinear dynamics of the form $\vf(\vx,\vu) = \vg(\vx) + \mB\vu$, but this assumption can be relaxed straightforwardly.
We do assume, however, that the origin is globally asymptotically stable for the unforced dynamics $d\vx/dt = \vg(\vx)$, so that infinite-time integrals are well-defined.
If the origin is unstable and the dynamics settle onto a different attractor, the ``nonlinear'' Gramians defined below can be defined over a finite-time horizon, and the balancing procedure and algorithmic details remain unchanged.
As in the previous section, consider collecting nonlinear impulse responses $\{\vx^{(j,k)}(t)\}$ by evolving \eqref{eq:nonlinear_fom} with initial conditions $\vx^{(j,k)}(0) = \alpha_k \mB_j$, where $\alpha_k$ are real scalars and $\mB_j$ is the $j$th column of $\mB$.
Then, we define the empirical reachability Gramian as
\begin{equation}
  \label{eq:empirical_WR}
  \mW_R =  \frac{1}{s}\sum_{k=1}^s \mW_R^{(k)},\quad \mW_R^{(k)} = \frac{1}{\alpha_k^2} \sum_{j=1}^m\int_0^\infty \vx^{(j,k)}(t) \vx^{(j,k)}(t)^\top dt,
\end{equation}
so that $\mW_R$ can be understood as an average of partial Gramians $\mW^{(k)}_R$ associated with impulse responses with initial amplitude scaling $\alpha_k$.
This definition ensures that $\mW_R$ in equation \eqref{eq:empirical_WR} reduces to the linear reachability Gramian in~\eqref{eq:linear_WR} when the dynamics are linear and asymptotically stable, since $\mW_R^{(k)} = \mW_R^{(1)}$ for all $k$.
Additionally, we note that this definition is very similar to the one in \cite{Lall1999}, with the main difference being that we restrict the initial conditions to be along the input directions~$\mB_j$, whereas \cite{Lall1999} allow for arbitrary linear combinations of the latter.
In the examples considered in this paper, we do not find this additional layer of generality necessary, and we therefore omit it for simplicity.

Following a similar procedure, we can define the (projected) empirical observability Gramian by collecting nonlinear output responses $\{\vy^{(j,k)}(t)\}$ from trajectories of \eqref{eq:nonlinear_fom} started from (freely chosen) initial conditions $\vx^{(j,k)}(0)$ of norm $\lVert \vx^{(j,k)}(0)\rVert = \beta_k$, where $j\in \{1,2,\ldots,l\}$ and $k\in \{1,2,\ldots, q\}$.
Then, the Gramian is defined as
\begin{equation}
  \label{eq:empirical_WO}
  \widetilde{\mW}_O = \frac{1}{q}\sum_{k=1}^q \mX^{(k)}_0 \left(\mX_0^{(k)\top}\mX_0^{(k)}\right)^{-1}\mG^{(k)}\left(\mX_0^{(k)\top}\mX_0^{(k)}\right)^{-1}\mX_0^{(k)\top}\quad\text{with}\quad \mG^{(k)}_{i,j} = \frac{1}{\beta_k^2} \int_0^\infty \vy^{(i,k)}(t)^\top \vy^{(j,k)}(t) dt.
\end{equation}
Here, $\mX_0^{(k)}$ is the $n\times l$ matrix whose $j$th column is the unit-norm initial condition $\vx^{(j,k)}(0)/\beta_k$.
Once again, it is straightforward to show that we recover the (projected) linear observability Gramian in \eqref{eq:linear_WO} when the dynamics and output map $\vh$ are linear.
Additionally, we remark that in the case of $\mX^{(k)}_0 = \mI\in\R^{n\times n}$, equation \eqref{eq:empirical_WO} recovers the empirical observability Gramian in \cite{Lall1999}.
We stress, however, that setting $\mX^{(k)}_0 = \mI$ is equivalent to running $n$ full-order simulations (where $n$ is the state-space dimension), which is infeasible in high-dimensional settings.
The use of a (carefully chosen) set of initial conditions $\mX_0^{(k)}\in\R^{n\times l}$ is therefore crucial to making the balancing procedure tractable in practice, and it is the key observation that makes the non-intrusive balancing procedure applicable to high-dimensional nonlinear systems.
Given the factorizations of these nonlinear empirical Gramians into $\mW_R = \mX\mX^\top$ and $\widetilde{\mW}_O = \widetilde{\mY}\widetilde{\mY}^\top$, we can then proceed to compute a balancing transformation as in the linear case, yielding the desired encoder-decoder pair in equation \eqref{eq:PhiPsi}.
A data-efficient algorithm for the computation of these factorizations is discussed in the next subsection.

\subsubsection{A Computationally-Efficient Implementation via Checkpointing}
\label{subsubsec:computational_algorithm}

We now develop a computationally-efficient algorithm to compute a suitable factorization of the nonlinear empirical Gramians~$\mW_R$ and $\widetilde{\mW}_O$.
We begin by observing that the empirical reachability Gramian $\mW_R$ can be approximated numerically by storing the impulse responses $\vx^{(j,k)}(t)$ at discrete time instances $t_i$ and performing numerical quadrature.
More specifically, $\mW_R$ in equation \eqref{eq:empirical_WR} can be factored numerically into $\mW_R = \mX \mX^\top$ following the approach of \cite{rowley2005model}, where
\begin{equation}
  \label{eq:X_factor}
  \mX = s^{-1/2}\begin{bmatrix}
    \alpha_1^{-1}\mX^{(1)} & \alpha_2^{-1}\mX^{(2)} & \cdots & \alpha_s^{-1}\mX^{(s)}
  \end{bmatrix}\in\R^{n\times s\, m\, N},
\end{equation}
and
\begin{equation}
  \label{eq:X_blocks}
  \mX^{(k)} = \begin{bmatrix}
    \mX^{(1,k)} & \mX^{(2,k)} & \cdots & \mX^{(m,k)}
  \end{bmatrix},\quad \mX^{(j,k)} = \begin{bmatrix}
    \sqrt{\gamma_1}\vx^{(j,k)}(t_1) & \sqrt{\gamma_2}\vx^{(j,k)}(t_2) & \cdots & \sqrt{\gamma_{N}}\vx^{(j,k)}(t_{N})
  \end{bmatrix}.
\end{equation}
Here, $0 = t_1 < t_2 < \ldots < t_N$ are the sampling instances and $\gamma_i$ are quadrature weights for the evaluation of the temporal integral in \eqref{eq:empirical_WR}, which is truncated at $t = t_N$.
In the case of a uniform sampling interval $\Delta t = t_i - t_{i-1} = t_2 - t_1$, we have $\gamma_i = \Delta t$.

Factoring the projected empirical observability Gramian $\widetilde{\mW}_O$ in \eqref{eq:empirical_WO} can be done in a similar fashion, but first we must define the initial condition matrices $\mX_0^{(k)}$.
The approach relies on the result of Theorem~\ref{thm:minimax}, which, in the linear case, suggests that the initial conditions should be chosen so that the range of $\mX_0^{(k)}$ coincides with the dominant eigenspace of the reachability Gramian $\mW_R^{(k)}$.
Since the dominant eigenspace of the reachability Gramian is well approximated by the span of the impulse responses $\{\vx^{(j,k)}(t)\}$ (see equation \eqref{eq:linear_WR}), we conclude that the initial conditions for the projected empirical observability Gramian should be sampled along the forward trajectories.
We therefore proceed as follows.
Consider the (uniformly sampled) trajectory $\{\vx^{(j,k)}(t_i)\}_{i=1}^N$ generated by the $j$th impulse response of amplitude $\alpha_k$, and let $\mX_0^{(k)}$ be defined as
\begin{equation}
  \label{eq:checkpoints}
  \begin{aligned}
    \mX_0^{(k)} &= \begin{bmatrix}
      \mX_0^{(1,k)} & \mX_0^{(2,k)} & \cdots & \mX_0^{(m,k)}
    \end{bmatrix}\in\R^{n\times m N_c},\,\,
    \mX_0^{(j,k)} = \begin{bmatrix}
    \frac{\vx^{(j,k)}(0)}{\lVert \vx^{(j,k)}(0) \rVert} & \frac{\vx^{(j,k)}(\Delta\tau)}{\lVert \vx^{(j,k)}(\Delta\tau) \rVert} & \cdots & \frac{\vx^{(j,k)}((N_c-1)\Delta \tau)}{\lVert \vx^{(j,k)}((N_c-1)\Delta \tau) \rVert}
  \end{bmatrix}\in\R^{n\times N_c},
  \end{aligned}
\end{equation}
where $\Delta\tau$ is a user-defined integer multiple of $\Delta t = t_{i} - t_{i-1}$, and $N_c$ is the user-defined number of \emph{checkpoints}, so that the $c$th checkpoint is located at $t = (c-1)\Delta\tau$ for $c\in\{1,2,\ldots,N_c\}$.
Since the checkpoints are extracted from the impulse responses used to compute $\mW_R$, the empirical observability Gramian \eqref{eq:empirical_WO} is evaluated with the same $q = s$ amplitude levels as the empirical reachability Gramian, with $\beta_k = \lvert\alpha_k\rvert$, and with $l = m N_c$ initial conditions per amplitude level (namely, $N_c$ checkpoints from each of the $m$ impulse responses).
In other words, this checkpointing procedure allows us to treat existing FOM snapshots as initial conditions for the computation of the empirical observability Gramian, without requiring additional FOM simulations.
The original trajectory $\{\vx^{(j,k)}(t_i)\}_{i=1}^N$ and its measured outputs $\{\vy^{(j,k)}(t_i)\}_{i=1}^N$ are then windowed and scaled into $N_c$ subtrajectories defined over a temporal window of length $\Delta T \coloneqq t_{N} - (N_c-1)\Delta\tau$,
\begin{equation}
  \label{eq:windowing}
  \vx^{(j,k,c)}(t) = \frac{\alpha_k}{\lVert \vx^{(j,k)}((c-1)\Delta\tau) \rVert}\vx^{(j,k)}(t + (c-1)\Delta\tau),\quad t \in [0, \Delta T],
\end{equation}
so that the $c$th subtrajectory starts from the $c$th column of $\mX_0^{(j,k)}$ scaled to have norm $\lvert\alpha_k\rvert$.
The checkpointing procedure is illustrated schematically in Figure~\ref{fig:checkpointing}.
\begin{figure}
  \centering
  \begin{tikzpicture}[
    >=Stealth,
    font=\small,
    declare function={
      f(\t) = exp(-\t/1.5) + 2.4*(\t/2.5)*exp(1 - \t/2.5);
      g(\t,\l) = exp(-\t/1.5) + (2.4 - 0.25*\l)*(\t/(2.5 + 0.45*\l))*exp(1 - \t/(2.5 + 0.45*\l));
    },
    ckpt/.style={circle, fill=#1, draw=white, line width=0.8pt, inner sep=2pt},
    axis/.style={-{Stealth[length=2mm]}, line width=0.6pt, black!65},
    panel/.style={rounded corners=6pt, fill=black!2.5, draw=black!12, line width=0.5pt},
    title/.style={anchor=south west, font=\small},
    box/.style={rounded corners=4pt, draw=#1!50, fill=#1!5, line width=0.6pt, inner sep=6pt, align=center,
                drop shadow={opacity=0.18, shadow xshift=0.7pt, shadow yshift=-0.7pt}},
    link/.style={-{Stealth[length=2.4mm]}, line width=0.9pt, black!45},
    note/.style={font=\footnotesize\itshape, black!55},
    mask/.style={fill=black!2.5, inner sep=1pt, rounded corners=1pt},
  ]
  \definecolor{c1}{RGB}{0,114,178}
  \definecolor{c2}{RGB}{213,94,0}
  \definecolor{c3}{RGB}{0,158,115}
  \definecolor{c4}{RGB}{204,121,167}
  \definecolor{ink}{RGB}{40,40,40}
  \def\sx{0.50}     
  \def\sy{0.85}     
  \def\dd{0.75}     
  \def\tilt{-15}    
  \def\aang{150}    
  \def\dtau{1.3}
  \def\tN{9}
  \pgfmathsetmacro{\DT}{\tN - 3*\dtau}
  \pgfmathsetmacro{\exx}{\sx*cos(-\tilt)}
  \pgfmathsetmacro{\exy}{\sx*sin(-\tilt)}
  \pgfmathsetmacro{\eyx}{\dd*cos(\aang)}
  \pgfmathsetmacro{\eyy}{\dd*sin(\aang)}

  \begin{scope}[local bounding box=A,
      x={(\exx cm,\exy cm)},
      y={(\eyx cm,\eyy cm)},
      z={(0cm,\sy cm)}]
    \fill[black!6] (0,0,0) -- (\tN,0,0) -- (\tN,2,0) -- (0,2,0) -- cycle;
    \draw[black!22, line width=0.4pt] (\tN,0,0) -- (\tN,2,0);
    \foreach \l/\op in {2/0.35, 1/0.6}{
      \fill[black!2.5] (0,\l,0) -- plot[domain=0:\tN, samples=80, smooth] (\x,\l,{g(\x,\l)}) -- (\tN,\l,0) -- cycle;
      \fill[black!7, opacity=\op] (0,\l,0) -- plot[domain=0:\tN, samples=80, smooth] (\x,\l,{g(\x,\l)}) -- (\tN,\l,0) -- cycle;
      \draw[black!30, line width=0.5pt] (0,\l,0) -- (\tN,\l,0);
      \draw[black!25, line width=0.4pt] (0,\l,0) -- (0,\l,{g(0,\l)});
      \draw[black!25, line width=0.4pt] (\tN,\l,0) -- (\tN,\l,{g(\tN,\l)});
      \draw[ink, line width=0.9pt, smooth, opacity=\op] plot[domain=0:\tN, samples=80] (\x,\l,{g(\x,\l)});
      \foreach \c/\col in {0/c1, 1/c2, 2/c3, 3/c4}{
        \pgfmathsetmacro{\ts}{\c*\dtau}
        \fill[\col, opacity=\op] (\ts,\l,{g(\ts,\l)}) circle (1.5pt);
      }
    }
    \fill[black!2.5] (0,0,0) -- plot[domain=0:\tN, samples=120, smooth] (\x,0,{f(\x)}) -- (\tN,0,0) -- cycle;
    \fill[black!7] (0,0,0) -- plot[domain=0:\tN, samples=120, smooth] (\x,0,{f(\x)}) -- (\tN,0,0) -- cycle;
    \draw[axis, black!50] (0,0,0) -- (0,2.9,0) node[above left=-1pt, black!65] {$\lvert\alpha_k\rvert$};
    \foreach \l/\n in {0/1, 1/2, 2/3}{
      \fill[black!55] (0,\l,0) circle (0.9pt);
      \node[anchor=north east, black!60, font=\footnotesize, inner sep=1.5pt] at (0,\l,0) {$\lvert\alpha_{\n}\rvert$};
    }
    \draw[axis] (0,0,0) -- (\tN+0.6,0,0) node[below=1pt] {$t$};
    \draw[axis] (0,0,0) -- (0,0,4.45) node[above] {$\lVert\vx^{(j)}\rVert$};
    \draw[black!30, line width=0.4pt] (\tN,0,0) -- (\tN,0,{f(\tN)});
    \draw[ink, line width=1.2pt, smooth] plot[domain=0:\tN, samples=120] (\x,0,{f(\x)});
    \foreach \c/\col in {0/c1, 1/c2, 2/c3, 3/c4}{
      \pgfmathsetmacro{\ts}{\c*\dtau}
      \pgfmathsetmacro{\te}{\ts + \DT}
      \pgfmathsetmacro{\zz}{-0.8 - 0.26*\c}
      \draw[\col, densely dotted, line width=0.6pt] (\ts,0,{f(\ts)}) -- (\ts,0,\zz);
      \draw[\col!30, line width=4.2pt, line cap=round] (\ts,0,\zz) -- (\te,0,\zz);
      \draw[\col, line width=0.8pt, line cap=round] (\ts,0,\zz) -- (\te,0,\zz);
      \fill[\col] (\ts,0,\zz) circle (1.5pt);
      \node[ckpt=\col] at (\ts,0,{f(\ts)}) {};
    }
    \foreach \c/\col/\lab in {0/c1/0, 1/c2/\Delta\tau, 2/c3/2\Delta\tau, 3/c4/3\Delta\tau}{
      \node[anchor=north west, \col, mask, xshift=1pt, yshift=-1pt] at (\c*\dtau,0,0) {$\lab$};
    }
    \draw[black!65, line width=0.6pt] (\tN,0,0) -- (\tN,0,-0.12);
    \node[anchor=north, mask, yshift=-1pt] at (\tN,0,-0.12) {$t_N$};
    \draw[decorate, decoration={brace, mirror, amplitude=4pt}, black!55]
      (3*\dtau,0,-1.75) -- (\tN,0,-1.75) node[midway, below=5pt, sloped, note] {windows of length $\Delta T$};
  \end{scope}

  \begin{scope}[local bounding box=B, shift={($(A.east)+(2.55,0.3)$)}]
    \draw[black!30, densely dotted, line width=0.6pt] (0,0.55) -- (\DT*0.7, 0.55);
    \draw[axis] (0,0) -- (\DT*0.7 + 0.35, 0) node[right] {$\tau$};
    \draw[axis] (0,0) -- (0, 2.05) node[above] {$\lVert\vx^{(j,k,c)}\rVert$};
    \foreach \c/\col in {3/c4, 2/c3, 1/c2, 0/c1}{
      \pgfmathsetmacro{\ts}{\c*\dtau}
      \pgfmathsetmacro{\fs}{f(\ts)}
      \draw[\col, line width=1.1pt, smooth] plot[domain=0:\DT, samples=100] (\x*0.7, {0.55*f(\x + \ts)/\fs});
    }
    \node[ckpt=ink] at (0, 0.55) {};
    \node[anchor=east] at (-0.08, 0.55) {$\lvert\alpha_k\rvert$};
    \node[below] at (0, 0) {$0$};
    \node[below] at (\DT*0.7, 0) {$\Delta T$};
    \foreach \c/\col [count=\i] in {1/c1, 2/c2, 3/c3, 4/c4}{
      \draw[\col, line width=1.1pt, line cap=round] (\DT*0.7 + 0.6, {2.0 - 0.3*\i}) -- ++(0.32,0) node[right, ink] {$c = \c$};
    }
    \node[note, align=left, anchor=north west] at (-0.1, -0.75) {each window is shifted to $\tau = 0$ and\\ rescaled to have initial norm $\lvert\alpha_k\rvert$};
  \end{scope}

  \coordinate (top) at ($(A.north)+(0,0.25)$);
  \coordinate (bot) at ($(A.south)+(0,-0.25)$);
  \node[title] (TA) at ($(A.west |- top)+(-0.1,0)$) {\textbf{(a)} Stored impulse responses $\vx^{(j,k)}(t_i)$};
  \node[title] (TB) at ($(B.west |- top)+(-0.1,0)$) {\textbf{(b)} Subtrajectories $\vx^{(j,k,c)}(\tau)$ for one $\alpha_k$};
  \begin{scope}[on background layer]
    \fill[panel] ($(A.west |- bot)+(-0.3,0)$) rectangle ($(A.east |- TA.north)+(0.3,0.15)$);
    \fill[panel] ($(B.west |- bot)+(-0.3,0)$) rectangle ($(B.east |- TB.north)+(0.3,0.15)$);
  \end{scope}
  \draw[link, line width=1.2pt, black!40] ($(A.east)+(0.4,0.9)$) -- node[above=1pt, note, align=center] {shift \&\\[-2pt] rescale} ($(B.west |- A.east)+(-0.4,0.9)$);

  \node[box=c1, anchor=north] (X0) at ($(A.south)+(0,-0.85)$)
    {\textbf{Normalized checkpoints}\\[3pt]
     $\mX_0^{(j,k)} = \Bigl[\,\dfrac{\vx^{(j,k)}((c-1)\Delta\tau)}{\lVert\vx^{(j,k)}((c-1)\Delta\tau)\rVert}\,\Bigr]_{c=1}^{N_c}$};
  \node[box=c2, anchor=north] (G) at ($(B.south |- X0.north)$)
    {\textbf{Gram matrix of windowed outputs}\\[3pt]
     $\mG^{(k)}_{\hat{i},\hat{j}} = \dfrac{1}{\alpha_k^2}\displaystyle\int_0^{\Delta T}\vy^{(\hat{i},k)}(\tau)^\top\vy^{(\hat{j},k)}(\tau)\,d\tau$};
  \draw[link] (X0.north |- bot) -- node[right, note] {normalize} (X0.north);
  \draw[link] (G.north |- bot) -- node[right, note] {integrate outputs} (G.north);
\end{tikzpicture}
  \caption{Schematic of the checkpointing procedure. (a) Impulse responses are stored for several amplitudes $\alpha_k$ (layers, ordered by $\lvert\alpha_k\rvert$, with the front layer corresponding to $\alpha_1$), and the procedure is applied separately to each amplitude level. Snapshots of each stored impulse response at the checkpoints $t = (c-1)\Delta\tau$ are normalized and used as initial conditions $\mX_0^{(j,k)}$ for the computation of the empirical observability Gramian. (b) The portion of the stored trajectory that follows each checkpoint is shifted in time and rescaled to have initial norm $\lvert\alpha_k\rvert$, yielding the subtrajectories \eqref{eq:windowing}, whose outputs define the Gram matrix $\mG^{(k)}$ in \eqref{eq:G_checkpoint}.}
  \label{fig:checkpointing}
\end{figure}
We do remark that the scaled trajectory $\vx^{(j,k,c)}(t)$ is, in general, not a solution of the FOM in equation \eqref{eq:nonlinear_fom}, unless the dynamics are linear, so, in principle, this procedure is only an approximation of the empirical observability Gramian.
Should this approximation prove insufficient, one can always run additional FOM simulations from the checkpointed initial conditions $\vx^{(j,k,c)}(0)$ to compute the empirical observability Gramian more accurately, but we do not find this necessary in the examples considered in this manuscript.
Defining the mapping $\hat{j} = (j-1)N_c + c$ so that $\vy^{(\hat{j},k)}(t) = \vy^{(j,k,c)}(t)$, we can then define the empirical observability Gramian as in \eqref{eq:empirical_WO}, but with
\begin{equation}
  \label{eq:G_checkpoint}
  \mG^{(k)}_{\hat{i},\hat{j}} = \frac{1}{\alpha_k^2} \int_0^{\Delta T} \vy^{(\hat{i},k)}(t)^\top \vy^{(\hat{j},k)}(t) dt,\quad \hat{i},\hat{j} \in \{1,2,\ldots,m\,N_c\}.
\end{equation}
Finally, the factorization $\widetilde{\mW}_O = \widetilde{\mY}\widetilde{\mY}^\top$ can be computed numerically by defining
\begin{equation}
  \label{eq:Y_factor}
  \widetilde{\mY} = s^{-1/2}\begin{bmatrix}
    \mY^{(1)} & \mY^{(2)} & \cdots & \mY^{(s)}
  \end{bmatrix}\in\R^{n\times s\,m\,N_c},\quad \mY^{(k)} = \mX_0^{(k)}\left(\mX_0^{(k)\top}\mX_0^{(k)}\right)^{-1}\mL^{(k)},
\end{equation}
where $\mL^{(k)}$ is the Cholesky factor of $\mG^{(k)}$, once the latter is evaluated numerically via quadrature.
The complete procedure is summarized in Algorithm~\ref{alg:checkpointed_balancing}.
We stress once more that the FOM is queried only in step~\ref{alg:step_fom}, and that all the remaining steps operate on the stored snapshots and outputs.
We close by observing that if the checkpointing procedure produces a rank-deficient $\mX_0^{(k)}$ matrix, then the orthogonal projector onto the range of $\mX_0^{(k)}$ can be defined using the pseudoinverse $\left(\mX_0^{(k)\top}\mX_0^{(k)}\right)^{+}$ obtained via a truncated SVD.
Similarly, if $\mG^{(k)}$ is rank-deficient or poorly conditioned, then the Cholesky factorization in step~\ref{alg:step_cholesky} can be replaced by a truncated SVD.
Moving forward, we drop the tilde notation, as it is understood that balanced test and trial bases are computed from the projected empirical observability Gramian $\widetilde{\mW}_O$.

\begin{algorithm}[t]
  \caption{Non-intrusive balancing via checkpointing}
  \label{alg:checkpointed_balancing}
  \begin{algorithmic}[1]
    \Require Amplitudes $\{\alpha_k\}_{k=1}^s$, sampling instances $\{t_i\}_{i=1}^N$ with quadrature weights $\{\gamma_i\}_{i=1}^N$, checkpoint spacing $\Delta\tau$, number of checkpoints $N_c$, rank $r$
    \Ensure Trial and test matrices $\widetilde{\mPhi},\widetilde{\mPsi}\in\R^{n\times r}$
    \For{$k = 1,\ldots,s$ and $j = 1,\ldots,m$}
      \State \label{alg:step_fom} Evolve \eqref{eq:nonlinear_fom} from $\vx^{(j,k)}(0) = \alpha_k\mB_j$ and store $\vx^{(j,k)}(t_i)$ and $\vy^{(j,k)}(t_i)$ for $i = 1,\ldots,N$
    \EndFor
    \State Assemble the reachability factor $\mX$ via \eqref{eq:X_factor}--\eqref{eq:X_blocks}
    \For{$k = 1,\ldots,s$}
      \State Assemble the checkpoints $\mX_0^{(k)}$ via \eqref{eq:checkpoints}
      \State Window and scale the stored outputs via \eqref{eq:windowing}
      \State \label{alg:step_cholesky} Compute $\mG^{(k)}$ via \eqref{eq:G_checkpoint} by quadrature, and its Cholesky factorization $\mG^{(k)} = \mL^{(k)}\mL^{(k)\top}$
    \EndFor
    \State Assemble the observability factor $\widetilde{\mY}$ via \eqref{eq:Y_factor}
    \State Compute the singular value decomposition $\widetilde{\mY}^\top\mX = \widetilde{\mU}\widetilde{\mSigma}\widetilde{\mV}^\top$
    \State Compute $\widetilde{\mPhi}$ and $\widetilde{\mPsi}$ via \eqref{eq:PhiPsi}
  \end{algorithmic}
\end{algorithm}

\subsection{Learning Latent-Space Dynamics}
\label{subsec:learning_lsd}

In this section, we briefly summarize the Operator Inference procedure for learning latent-space dynamics from high-fidelity data \citep{peherstorfer_data-driven_2016,kramer_learning_2024}.
This is our method of choice for learning the latent dynamics because, under a polynomial ansatz, the solution can be computed in closed form without the need for gradient-based optimization, as we shall see momentarily.
However, we once again emphasize that any other method (e.g., Neural ODEs \citep{chen2018neuralode}, SINDy \citep{brunton2016sindy}, Gaussian Processes \citep{rasmussen2006gp}, etc.) can be used instead.
Once the encoder $\mPsi^\top$ and decoder $\mPhi\left(\mPsi^\top\mPhi\right)^{-1}$ are available, Operator Inference projects the high-dimensional snapshots $\vx(t_i)$ from all training trajectories onto the latent space to obtain the reduced coordinates $\vz(t_i) = \mPsi^\top\vx(t_i)$.
Assuming that the latent-space dynamics $\vf_r(\vz,\vu)$ can be taken as polynomial functions of the reduced coordinates $\vz$ and the inputs $\vu$, e.g., $\vf_r(\vz,\vu) = \mA_r\vz + \mH_r:\left(\vz\otimes\vz\right) + \mB_r\vu$, the reduced-order tensors $\mA_r \in \mathbb{R}^{r\times r}$, $\mH_r \in \mathbb{R}^{r\times r\times r}$, and $\mB_r \in \mathbb{R}^{r\times m}$ can be inferred by solving the following least-squares problem
\begin{equation}
  \label{eq:operator_inference}
  \min_{\left(\mA_r,\mH_r,\mB_r\right)} \sum_{i} \gamma_i \left\lVert \dot{\vz}(t_i) - \vf_r\left(\vz(t_i),\vu(t_i)\right) \right\rVert_2^2.
\end{equation}
Here, $\gamma_i$ are weights accounting for the importance of each data point (e.g., if the snapshots are sampled non-uniformly in time), and $\dot{\vz}(t_i)$ is the time derivative of the reduced coordinates, which can be computed via finite differences from the stored snapshots $\vz(t_i)$.
Alternatively, it can be computed exactly by evaluating the full-order dynamics, i.e., $\dot{\vz}(t_i) = \mPsi^\top\vf(\vx(t_i),\vu(t_i))$.
A third procedure, which we adopt in this work, is to perform a so-called ``reprojection,'' where $\dot{\vz}(t_i) = \mPsi^\top\vf(\mPhi\left(\mPsi^\top\mPhi\right)^{-1}\vz(t_i),\vu(t_i))$.
In words, $\dot{\vz}$ is obtained by encoding the full-order dynamics~$\vf$ evaluated at $\hat{\vx}\coloneqq\mathbb{P}\vx$, where $\mathbb{P} = \mPhi\left(\mPsi^\top\mPhi\right)^{-1}\mPsi^\top$ is the oblique projector associated with the encoder-decoder pair under consideration.
It was shown in \cite{peherstorfer_data-driven_2016} that, under mild conditions, this procedure exactly recovers the tensors that one would obtain via Petrov-Galerkin projection of the full-order dynamics $\vf$.

While the least-squares problem \eqref{eq:operator_inference} can be solved conveniently in closed form, it provides no guarantee that the inferred latent-space dynamics are stable, and the resulting models can exhibit finite-time blow-up or unbounded growth when evaluated outside of the training regime.
When stability is a concern, we therefore adopt the globally-asymptotically-stable Operator Inference (GasOpInf) formulation of \cite{goyal_guaranteed_2025}, with the minor modifications introduced in \cite{errico2026gasnitrom}.
For quadratic latent-space dynamics $\vf_r(\vz,\vu) = \mA_r\vz + \mH_r:\vz\vz^\top + \mB_r\vu$, global asymptotic stability of the unforced system is guaranteed if there exists a quadratic Lyapunov function $V(\vz) = \vz^\top\widetilde{\mQ}\vz$, with $\widetilde{\mQ}$ symmetric positive definite, such that the linear dynamics are strictly dissipative and the quadratic dynamics are energy-preserving with respect to $\widetilde{\mQ}$, i.e.,
\begin{equation}
  \label{eq:gas_conditions}
  \vz^\top\left(\widetilde{\mQ}\mA_r + \mA_r^\top\widetilde{\mQ}\right)\vz < 0,\qquad \vz^\top\widetilde{\mQ}\left(\mH_r:\vz\vz^\top\right) = 0,\qquad \forall\,\vz\neq\bm{0}.
\end{equation}
Following \cite{gillis_computing_2017} and \cite{goyal_guaranteed_2025}, these conditions are enforced by construction by parameterizing the reduced-order tensors as
\begin{equation}
  \label{eq:gas_parameterization}
  \mA_r = \left(\mK - \mK^\top - \mR^{-1}\mR^{-\top}\right)\mQ^{-1}\mQ^{-\top},\qquad \mH_{r,k} = \left(\mS_k - \mS_k^\top\right)\mQ^{-1}\mQ^{-\top},
\end{equation}
where $\widetilde{\mQ} = \mQ^{-1}\mQ^{-\top}$, $\mH_{r,k}$ denotes the $k$th frontal slice of $\mH_r$, and $\mK$, $\mR$, $\mQ\in\R^{r\times r}$ and $\mS_k\in\R^{r\times r}$ are unconstrained matrices (with $\mR$ and $\mQ$ invertible).
GasOpInf then minimizes the same objective function as in \eqref{eq:operator_inference}, but over the unconstrained parameters $(\mK,\mR,\mQ,\{\mS_k\}_{k=1}^r,\mB_r)$.
Since the objective function is no longer quadratic in these parameters, the solution must be computed iteratively using a gradient-based algorithm, and we refer the reader to \cite{errico2026gasnitrom} for closed-form expressions of the gradients and for a convenient initialization strategy.

\section{Results}
\label{sec:results}

In this section, we demonstrate the performance of the proposed data-driven balancing procedure on several benchmark problems of increasing size and complexity, and we compare against standard approaches based on orthogonal projections onto POD modes. 
Specifically, our models are OpInf or GasOpInf models learned from the reduced coordinates obtained via the balancing procedure in Algorithm~\ref{alg:checkpointed_balancing}, and we compare against equivalent OpInf or GasOpInf models learned from POD modes obtained from the same training data.
All instances of OpInf and GasOpInf compute the time derivatives of the reduced coordinates via reprojection, as discussed in Section~\ref{subsec:learning_lsd}, as we have found that this promotes stability and translates into improved accuracy.
The examples include the complex Ginzburg-Landau equation at two different parameter values (one for which the origin is stable and one for which it is unstable), the flow past a NACA 0012 airfoil at Reynolds number $Re = 600$ and angle of attack $6^\circ$, and the flow past an aspect-ratio-one wing with NACA 0012 profile at Reynolds number $Re = 1000$ and angle of attack $6^\circ$.

\subsection{Complex Ginzburg-Landau Equation with a Stable Origin}
\label{subsec:cgl_stable}

Our first example is the complex Ginzburg-Landau (CGL) equation
\begin{equation}
  \label{eq:cgl}
  \frac{\partial q}{\partial t} = \left(-\nu\frac{\partial}{\partial x} + \gamma\frac{\partial^2}{\partial x^2} + \mu(x)\right) q - a\lvert q\rvert^2 q,\quad x\in(-\infty,\infty),\quad q(x,t)\in\mathbb{C},
\end{equation}
which is a widely used model for the spatio-temporal evolution of disturbances in spatially developing shear flows \citep{ilak2010cgl}.
We adopt the parameters in \cite{ilak2010cgl}, namely $a = 0.1$, $\gamma = 1 - i$, $\nu = 2 + 0.4i$, and a spatially varying amplification rate $\mu(x) = \mu_0 - c_u^2 + \mu_2 x^2/2$ with $c_u = 0.2$, $\mu_2 = -0.01$ and $\mu_0 = 0.38$.
For this choice of parameters, the origin $q = 0$ is linearly stable, but the linearized dynamics are highly non-normal and, consequently, small perturbations can undergo large transient amplification before eventually decaying.
This makes the CGL equation a challenging benchmark for model reduction techniques that rely on orthogonal projections, and a natural benchmark for balancing-based techniques.

The system is actuated and observed via spatially localized Gaussian kernels.
Specifically, the input $u(t)\in\mathbb{C}$ and the output $y(t)\in\mathbb{C}$ are defined as
\begin{equation}
  \label{eq:cgl_io}
  (\tilde\mB u)(x,t) = \exp\left\{-\left(\frac{x - \bar{x}}{s}\right)^2\right\}u(t),\qquad y(t) = (\mC q)(t) = \int_{-\infty}^{\infty}\exp\left\{-\left(\frac{x + \bar{x}}{s}\right)^2\right\}q(x,t)\,dx,
\end{equation}
with $s = 1.6$ and $\bar{x} = -\sqrt{-2(\mu_0 - c_u^2)/\mu_2}$, so that the actuator is located at the upstream boundary (``branch I'') of the region of local instability and the sensor is located at its downstream mirror image \citep{ilak2010cgl}.
Upon spatial discretization with a fourth-order finite difference scheme on a grid with $n=256$ nodes, and upon splitting the complex-valued quantities into their real and imaginary parts, equation \eqref{eq:cgl} can be written as a real-valued system with cubic dynamics
\begin{equation}
  \label{eq:cgl_discrete}
  \frac{d}{dt}\vq = \mA\vq + \mH:\left(\vq\otimes\vq\otimes\vq\right) + \mB\vu,\qquad \vy = \mC\vq,
\end{equation}
where $\vq\in\R^{2n}$, $\vu\in\R^2$, $\vy\in\R^2$, and $\mB_j = \tilde{\mB}_j/\lVert\tilde{\mB}_j\rVert$, where $\lVert \cdot\rVert$ is the Euclidean norm on $\mathbb{R}^{2n}$.
In the notation of Section~\ref{sec:math_formulation}, this system has $m = p = 2$ and it is control-affine, so that Algorithm~\ref{alg:checkpointed_balancing} applies directly.
Given the cubic nature of the full-order dynamics, we seek reduced-order models with linear, cubic and input terms, i.e., $\vf_r(\vz,\vu) = \mA_r\vz + \mH_r:\left(\vz\otimes\vz\otimes\vz\right) + \mB_r\vu$.
(The input tensor $\mB_r$ is not inferred, but rather it is computed via Petrov-Galerkin projection of the full-order input matrix $\mB$ onto the latent space, i.e., $\mB_r = \mPsi^\top\mB$.)
The training and testing data are generated as in \cite{padovan2024nitrom}.
Specifically, we train on the impulse responses of \eqref{eq:cgl_discrete} from the initial conditions $\vq^{(j,k)}(0) = \alpha_k\mB_j$, with $j\in\{1,2\}$ and $\alpha_k\in\{-1.0,\,0.01,\,0.1,\,1.0\}$, for a total of $s\,m = 8$ training trajectories.
The state and output along each trajectory are sampled at $N = 2501$ uniformly spaced time instances in $[0,500]$, with $\Delta t = 0.2$.
These are the only FOM simulations required by Algorithm~\ref{alg:checkpointed_balancing}.
In order to perform the balancing procedure, we extract $N_c = 30$ checkpoints from each trajectory, spaced $\Delta\tau = 0.6$ time units apart, and we compute $r=6$-dimensional trial and test subspaces using Algorithm~\ref{alg:checkpointed_balancing}. 
At $r=6$, we retain more than $99\%$ of the energy of the Hankel singular values of the projected Gramians.

We compare the resulting Balanced Operator Inference (BOpInf) model with a standard (i.e., POD-based) Operator Inference (OpInf) model of the same dimension, in which the full-order data are projected orthogonally onto the span of the leading POD modes of the training data.
For this example, $r=6$ was sufficient to capture more than $99\%$ of the variance of the training data.
Both models are tested on impulse responses whose amplitudes are not included in the training set, and on sinusoidal inputs that were never seen during training.
In particular, we generate $N_{\text{test}} = 50$ testing impulse responses from initial conditions $\alpha\mB_l$, with $l\in\{1,2\}$ and with $\alpha$ drawn uniformly at random from $[-1,1]$, and we define the testing error along the $j$th testing trajectory as
\begin{equation}
  \label{eq:cgl_error}
  e_j(t) = \frac{\lVert\vy^{(j)}(t) - \hat{\vy}^{(j)}(t)\rVert^2}{\eta_j},\quad \eta_j = \frac{1}{t_f}\int_0^{t_f}\lVert\vy^{(j)}(t)\rVert^2 dt,
\end{equation}
where $\vy^{(j)}$ and $\hat{\vy}^{(j)}$ are the outputs of the FOM and of the ROM, and $\eta_j$ is the time-averaged energy of $\vy^{(j)}$.
The results for this example are collected in Figure~\ref{fig:cgl_responses}, where the upper and lower panels of each pair correspond to $\mu_0 = 0.38$ and $\mu_0 = 0.41$, respectively.
The errors along all testing trajectories are shown in the top-left panels.
Except for a brief initial transient, the error of the BOpInf model remains at the level of a few percent over the whole time horizon for all testing trajectories, which is between one and three orders of magnitude below that of the OpInf model.
By contrast, while the OpInf model is accurate at very early times, its error rapidly grows to $\mathcal{O}(1)$ as the disturbance undergoes transient amplification, and it eventually exceeds $\mathcal{O}(10)$ for most testing trajectories as the model develops spurious oscillations at late times.
These observations are confirmed by the representative testing impulse response with amplitude $\alpha = 0.3$ in the top-right panels: the BOpInf model tracks the large-amplitude transient growth and the subsequent decay of the output, while the OpInf model fails to capture the transient and predicts a spurious late-time response.
Finally, the bottom panels of Figure~\ref{fig:cgl_responses} show the response to the sinusoidal inputs $w(t) = 0.05\sin(\omega t)$ and $w(t) = 0.2\sin(2\omega t)$, where $\omega\approx 0.648$ is the natural frequency of the system.
(We recall that we did not train on sinusoidal forcing inputs.)
As in \cite{padovan2024nitrom}, these inputs enter the dynamics as $\mB\vu(t) = w(t)\,\mB\vv/\lVert\mB\vv\rVert$, where $\vv\in\R^2$ is a randomly chosen vector, and the system is initialized from the origin.
At the natural frequency, the BOpInf model reproduces the amplitude and phase of the forced response of the FOM almost exactly, while the OpInf model underestimates the response for most of the time horizon and then drifts into a spurious oscillation.
At the second harmonic $2\omega$, which is a more demanding test since the training data are dominated by dynamics at the natural frequency, the BOpInf model closely tracks the FOM response over the whole time horizon, while the OpInf model develops spurious large-amplitude oscillations for $t\gtrsim 100$.
We attribute the improved performance of the proposed method to the fact that it achieves dimensionality reduction via an oblique projection informed by balancing.
To illustrate this point, in Figure~\ref{fig:cgl_modes} we compare the leading POD modes with the columns of the trial and test matrices $\mPhi$ and $\mPsi$ computed by Algorithm~\ref{alg:checkpointed_balancing}.
The POD modes, which capture the most energetic structures in the training data, are spread over the region of local instability between the actuator and the sensor.
By contrast, the (balanced) trial modes are concentrated near the sensor at $x = -\bar{x}$, where the disturbances reach their largest amplitude and where they are observed, while the test modes (which define the encoder) are concentrated near the actuator at $x = \bar{x}$, where the system is most sensitive to perturbations and where the adjoint modes of the linearized dynamics are typically supported (see, e.g., Figure~6 in~\cite{CHEN_ROWLEY_2011}).
This spatial separation between the trial and test modes is a hallmark of non-normal dynamics, and it cannot be represented by an orthogonal projection, where the trial and test subspaces coincide.
Finally, we note that the modes in Figure~\ref{fig:cgl_modes} come in pairs with identical magnitude because the CGL equation and its inputs and outputs are equivariant under the phase rotation $q\mapsto e^{i\theta}q$: consequently, the Gramians commute with multiplication by the imaginary unit, so that if $\phi(x)$ is a mode, then so is $i\phi(x)$, with the same singular value.

\begin{figure}
  \centering
  \includegraphics[width=0.75\textwidth]{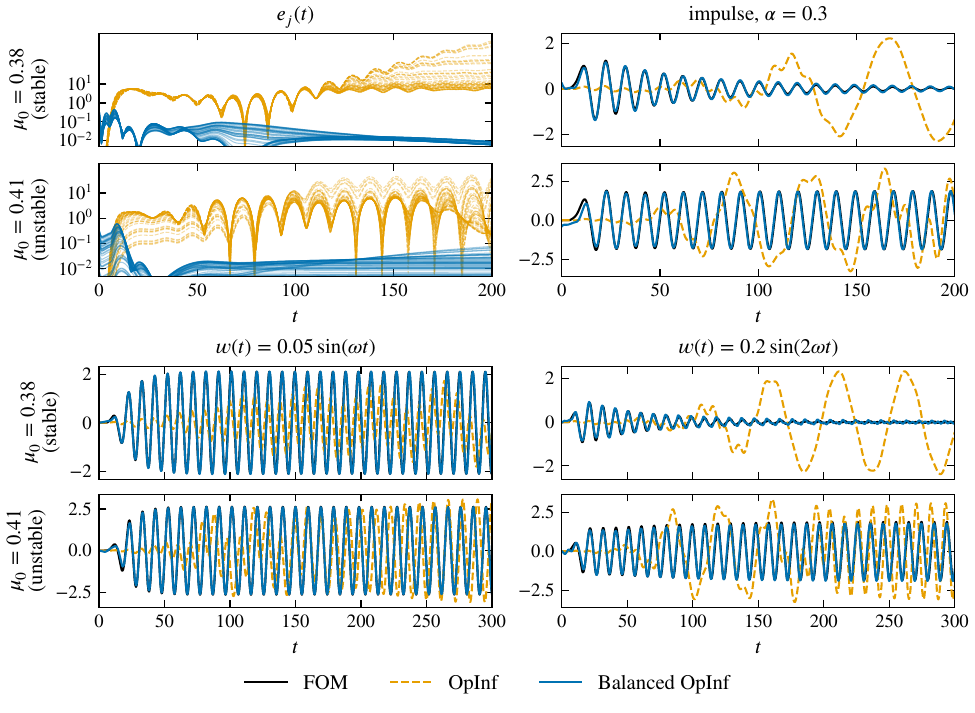}
  \caption{CGL equation. Testing errors \eqref{eq:cgl_error} along all testing trajectories (top left), real part of the output from a representative testing impulse response with amplitude $\alpha = 0.3$ (top right), and real part of the output in response to sinusoidal inputs at frequencies $\omega$ and $2\omega$ (bottom). In each pair of panels, the upper panel corresponds to the stable case ($\mu_0 = 0.38$, Section~\ref{subsec:cgl_stable}) and the lower panel to the unstable case ($\mu_0 = 0.41$, Section~\ref{subsec:cgl_unstable}).}
  \label{fig:cgl_responses}
\end{figure}

\begin{figure}
  \centering
  \includegraphics[width=0.75\textwidth]{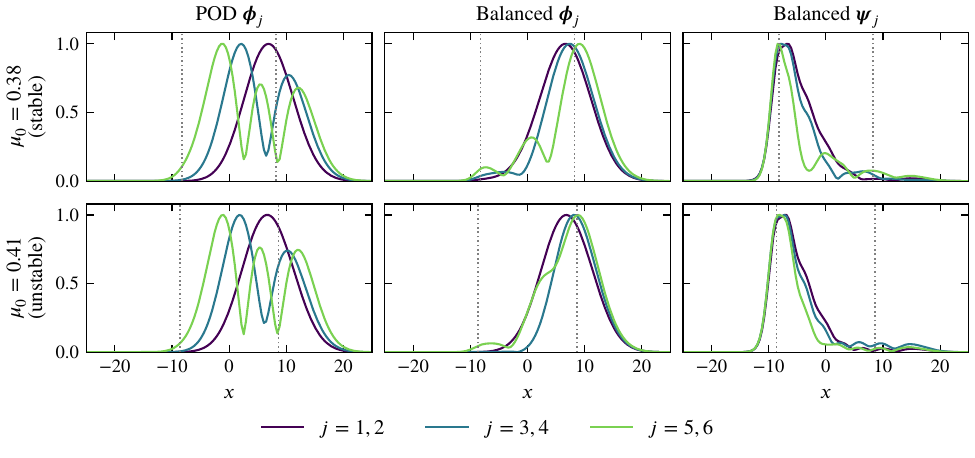}
  \caption{CGL equation. Magnitude of the leading POD modes (left), and of the columns $\vphi_j$ and $\vpsi_j$ of the trial and test matrices $\mPhi$ and $\mPsi$ computed by Algorithm~\ref{alg:checkpointed_balancing} (middle and right), for the stable (top) and unstable (bottom) cases. Each mode is normalized to unit maximum magnitude, and the vertical dotted lines indicate the locations of the actuator ($x = \bar{x}$) and of the sensor ($x = -\bar{x}$).}
  \label{fig:cgl_modes}
\end{figure}

\FloatBarrier
\subsection{Complex Ginzburg-Landau Equation with an Unstable Origin}
\label{subsec:cgl_unstable}

We now consider the CGL equation \eqref{eq:cgl} with $\mu_0 = 0.41$, while all the remaining parameters, the inputs and outputs, the spatial discretization, and the training and testing procedures are identical to those described in Section~\ref{subsec:cgl_stable}.
In particular, we train on $s\,m = 8$ impulse responses sampled at $N = 2501$ time instances in $[0,500]$, we extract $N_c = 30$ checkpoints spaced $\Delta\tau = 0.6$ apart from each trajectory, and we compute models of dimension $r = 6$.
For this value of $\mu_0$, the origin is linearly unstable, and small perturbations grow exponentially until they saturate onto a stable time-periodic limit cycle \citep{ilak2010cgl}.
We include this example to demonstrate that the proposed procedure does not require the existence of a neighboring stable steady state.
In fact, while the infinite-time Gramians associated with the (unstable) linearization about the origin are not defined, the empirical Gramians in Section~\ref{subsubsec:data_driven_nonlinear_balancing} are computed from finite-time trajectories sampled from the nonlinear system, and therefore remain well defined.

The testing errors and a representative testing impulse response are shown in the top panels of Figure~\ref{fig:cgl_responses} (lower panel of each pair).
Once again, after a brief initial transient, the error of the BOpInf model drops to a few percent and remains at this level over the entire time horizon, approximately two orders of magnitude below that of the OpInf model.
In particular, the impulse response shows that the BOpInf model locks onto the limit cycle within the first couple of oscillation periods, and it predicts its amplitude and frequency with high accuracy thereafter.
The OpInf model, on the other hand, fails to capture both the transient and the limit cycle, and it develops an irregular oscillation whose amplitude is up to approximately $50\%$ larger than that of the FOM.
Similar conclusions can be drawn from the responses to the sinusoidal inputs shown in the bottom panels of Figure~\ref{fig:cgl_responses}.
For both forcing frequencies, the BOpInf model closely tracks the amplitude and phase of the FOM response, while the OpInf model eventually settles onto an oscillation with incorrect amplitude.
Finally, the POD modes and the trial and test modes are shown in the bottom row of Figure~\ref{fig:cgl_modes}.
As in the stable case, the trial modes are localized near the sensor and the test modes are localized near the actuator, indicating that the balancing procedure identifies an oblique projection that accounts for the non-normality of the dynamics even when the latter evolve on a limit cycle far from the origin.

\FloatBarrier
\subsection{Flow Over an Airfoil}
\label{subsec:airfoil}

Our next example is the two-dimensional incompressible flow over a NACA 0012 airfoil at an angle of attack of $6^\circ$ and at a chord-based Reynolds number $Re = 600$.
The flow is governed by the incompressible Navier-Stokes equations, which we simulate with an in-house solver that implements the immersed boundary projection method of \cite{Taira2007} with a second-order finite-volume discretization on a fully staggered grid.
The (non-dimensional) computational domain is $[-5,16]\times[-4,4]$, with the mid-chord of the airfoil located at the origin.
A uniform velocity $(u,v) = (1,0)$ is imposed at all boundaries except for the outflow, where we impose a homogeneous Neumann condition on the velocity.
For this choice of parameters, the flow admits a linearly stable steady state, whose vorticity field is shown in Figure~\ref{fig:airfoil_base_modes}(a).
However, the linearized dynamics about this steady state are highly non-normal, and small perturbations near the surface of the airfoil are amplified by several orders of magnitude as they are advected into the wake, where they develop into vortex-shedding-like wavepackets before eventually decaying.

\begin{figure}
  \centering
  \includegraphics[width=0.65\textwidth]{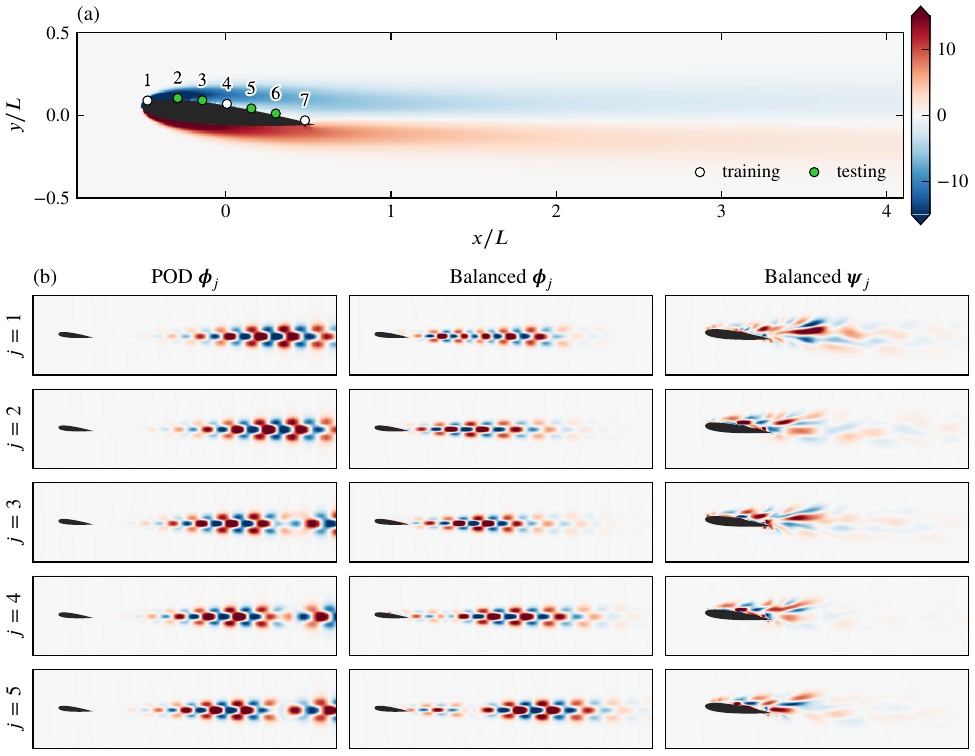}
  \caption{Airfoil flow. (a) Vorticity of the steady state, with the forcing stations used for training (white) and testing (green). (b) Vorticity of the leading POD modes (left), and of the columns $\vphi_j$ and $\vpsi_j$ of the trial and test matrices $\mPhi$ and $\mPsi$ computed by Algorithm~\ref{alg:checkpointed_balancing} (middle and right).}
  \label{fig:airfoil_base_modes}
\end{figure}

We consider the dynamics of perturbations $\vq(t)$ about the steady state, forced by spatially localized inputs
\begin{equation}
  \label{eq:airfoil_forcing}
  b(x,y)\,w(t) = \exp\left\{-1250\left((x - x_c)^2 + (y - y_c)^2\right)\right\}w(t),
\end{equation}
where $(x_c, y_c)$ is located along the upper surface of the airfoil at one of the seven stations shown in Figure~\ref{fig:airfoil_base_modes}(a).
Upon spatial discretization and removal of the pressure and of the immersed-boundary forces via projection onto the space of discretely divergence-free velocity fields, the perturbation dynamics take the control-affine quadratic form
\begin{equation}
  \label{eq:airfoil_fom}
  \frac{d}{dt}\vq = \mA\vq + \mH:\vq\vq^\top + \mB\vu,
\end{equation}
where the $j$th column of $\mB$ is the (Leray-projected and unit-norm) discretization of \eqref{eq:airfoil_forcing} at the $j$th station.
Since the computational mesh is highly stretched away from the airfoil, we work with the weighted state $\widetilde{\vq} = \mW^{1/2}\vq$, where $\mW$ is a diagonal matrix containing the ratio between the volume of each cell and the volume of the smallest cell in the computational domain, so that the Euclidean norm of $\widetilde{\vq}$ is a (scaled) measure of the kinetic energy of the perturbation.
Throughout this section, we measure the whole weighted state, i.e., $\vy = \widetilde{\vq}$, and the state dimension is $n = 2\,n_x\, n_y = 6.75\times 10^5$, with $n_x = 750$ and $n_y = 450$ grid points in the streamwise and wall-normal directions, respectively.

\paragraph{Training.}
We train on impulse responses forced at stations~1, 4 and 7 in Figure~\ref{fig:airfoil_base_modes}(a), which are located near the leading edge, the mid-chord and the trailing edge of the airfoil, respectively.
That is, we simulate \eqref{eq:airfoil_fom} from the initial conditions $\vq^{(j,k)}(0) = \alpha_k\mB_j$, with $j\in\{1,2,3\}$ indexing the training stations and $\alpha_k\in\{0.1,\,1.0,\,2.0\}$, for a total of $s\,m = 9$ training trajectories.
Since the impulse responses evolve on a fast time scale right after the impulse, we store snapshots every $0.02$ time units for $t\in[0,4]$, and every $0.2$ time units for $t\in(4,39.8]$.
We stress that both the balancing procedure and POD are applied to the weighted snapshots $\widetilde{\vq}(t_i) = \mW^{1/2}\vq(t_i)$, so that all inner products (and hence all Gramians and POD covariances) are computed in the energy-weighted norm described above.
The empirical reachability Gramian is then factored as in \eqref{eq:X_factor}--\eqref{eq:X_blocks}, with trapezoidal quadrature weights $\gamma_i = (t_{i+1} - t_{i-1})/2$ to account for the non-uniform sampling.
The trial and test matrices $\mPhi$ and $\mPsi$ are computed using Algorithm~\ref{alg:checkpointed_balancing}, with a few modifications to the checkpointing procedure of Section~\ref{subsubsec:computational_algorithm} that account for the multiscale nature of the impulse responses.
First, rather than using uniformly spaced checkpoints, we place $20$ checkpoints at $t\in\{0, 0.1, \ldots, 1.9\}$ to resolve the fast transient that follows the impulse, and $19$ additional checkpoints approximately uniformly spaced on $[0.2, 20]$ (each moved to the nearest stored snapshot), for a total of $N_c = 39$ checkpoints per trajectory.
Second, all subtrajectories in \eqref{eq:windowing} are defined over a window of length $\Delta T = 19.8$, which is the time between the last checkpoint and the end of the stored trajectory, and the integrals in \eqref{eq:G_checkpoint} are evaluated with the rectangle rule on a uniform grid with spacing $0.2$.
Subtrajectory samples that do not coincide with a stored snapshot are computed via six-point local Lagrange interpolation.
Finally, given that consecutive checkpoints along a trajectory can be nearly collinear, the inverse $\left(\mX_0^{(k)\top}\mX_0^{(k)}\right)^{-1}$ in \eqref{eq:Y_factor} is replaced by a pseudoinverse in which singular values below $10^{-4}$ are truncated, as discussed at the end of Section~\ref{subsubsec:computational_algorithm}.
We take $r = 50$, which is sufficient for both the leading POD modes and the balanced modes to capture more than $99\%$ of the energy in their respective spectra.
The reduced-order dynamics are then inferred using GasOpInf (see Section~\ref{subsec:learning_lsd}), so that the inferred quadratic dynamics are globally asymptotically stable by design.
As in the CGL example, the input tensor is not inferred, but rather it is computed via Petrov-Galerkin projection of the (weighted) full-order input matrix, i.e., $\mB_r = \mPsi^\top\mW^{1/2}\mB$, so that GasOpInf only optimizes over the parameters that define $\mA_r$ and $\mH_r$.
The wall-clock times for this example are reported in Table~\ref{tab:timings_airfoil}, which shows that the cost of computing the balanced bases is comparable to that of computing POD modes, and that both are small compared to the cost of generating the training data.

\begin{table}
  \centering
  \caption{Airfoil flow. Wall-clock times (in seconds). The FOM was simulated on an NVIDIA GH200 GPU, and the ROM time refers to the integration of one trajectory over the training horizon $t\in[0,40]$. The GasOpInf fit refers to the balanced basis (the fit in the POD basis took approximately $98$ seconds).}
  \label{tab:timings_airfoil}
  \begin{tabular}{ccccc}
    \toprule
    POD & Balancing (Alg.~\ref{alg:checkpointed_balancing}) & GasOpInf fit & ROM integration & FOM integration \\
    \midrule
    $20.6$ & $16.2$ & $24.2$ & $1.09$ & $255$ \\
    \bottomrule
  \end{tabular}
\end{table}

We refer to the resulting model as Balanced GasOpInf (BGasOpInf), and we compare it with a GasOpInf model of the same dimension trained on the same data, but where the high-dimensional data are projected orthogonally onto the span of the leading POD modes.
Since the snapshots are sampled non-uniformly in time, the POD modes are computed from the same non-uniformly sampled training data used for balancing, with each weighted snapshot $\widetilde{\vq}(t_i)$ scaled by $\sqrt{\gamma_i}$, where $\gamma_i$ are the same trapezoidal quadrature weights used for the empirical reachability Gramian, so that the POD modes are representative of the time-integrated (rather than sample-averaged) energy of the data.
The leading $5$ POD modes and the columns of $\mPhi$ and $\mPsi$ are shown in Figure~\ref{fig:airfoil_base_modes}(b).
The POD modes are dominated by large-amplitude wavepackets far downstream in the wake, where the training trajectories carry most of their energy.
The trial modes $\vphi_j$ display similar wavepackets, but they are concentrated closer to the airfoil, where the perturbations are still growing.
The trial modes $\vphi_j$ with $j > 5$ (not shown) are supported farther downstream, and resemble the leading POD modes.
The test modes $\vpsi_j$, on the other hand, are localized near the surface of the airfoil and in the near wake, which is where the flow is most receptive to perturbations and where the adjoint modes of the linearized dynamics are typically supported.
As in the CGL examples, the spatial separation between the trial and test modes is a signature of the non-normality of the underlying dynamics, and it is precisely what an orthogonal projection onto POD modes fails to capture.

\begin{figure}
  \centering
  \includegraphics[width=0.85\textwidth]{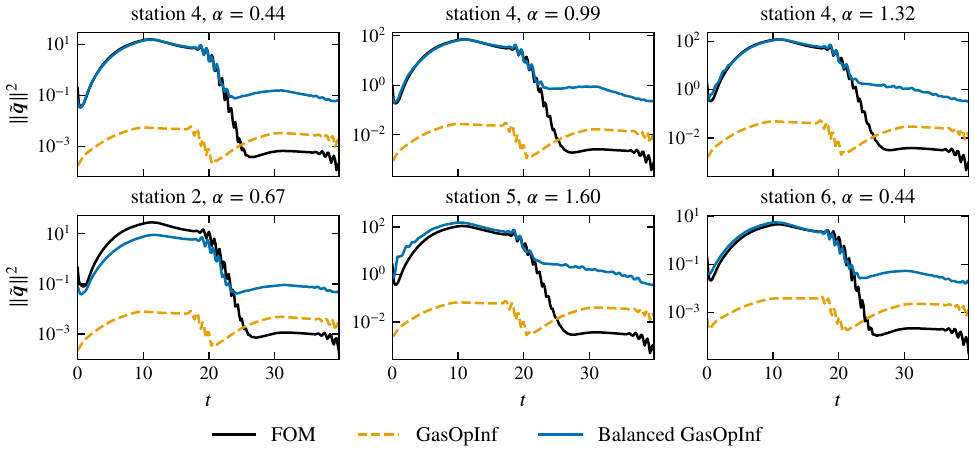}
  \caption{Airfoil flow. Perturbation energy of six representative testing impulse responses forced at the training station~4 (top) and at the unseen stations~2, 5 and~6 (bottom).
  }
  \label{fig:airfoil_testing}
\end{figure}

\begin{figure}
  \centering
  \includegraphics[width=0.85\textwidth]{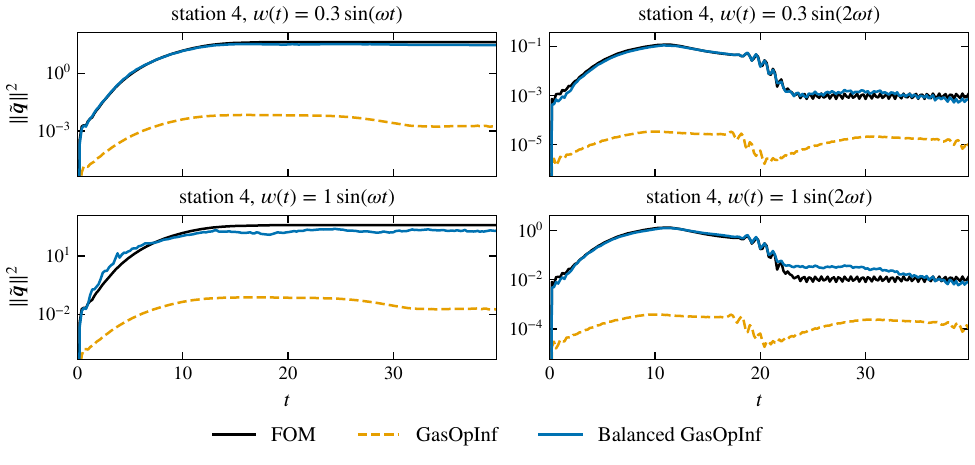}
  \caption{Airfoil flow. Perturbation energy in response to sinusoidal forcing at the mid-chord station~4, with amplitudes $a\in\{0.3,1\}$ (top and bottom) and frequencies $\omega$ and $2\omega$ (left and right).}
  \label{fig:airfoil_forced}
\end{figure}

\paragraph{Testing.}
We first test the models on impulse responses with amplitudes that were not seen during training, forced both at the mid-chord training station~4 and at the unseen stations~2, 3, 5 and~6.
The latter is a demanding test, since the corresponding input directions were never seen during training.
In all cases, the testing amplitudes are drawn uniformly at random from the training range $[0.1, 2]$.
The time history of the perturbation energy $\lVert\widetilde{\vq}\rVert^2$ for six representative testing trajectories is shown in Figure~\ref{fig:airfoil_testing}.
The GasOpInf model completely fails to capture the response: its predicted energy remains several orders of magnitude below that of the FOM throughout the transient, so that it is essentially no more accurate than the trivial prediction $\hat{\vq}(t)\equiv\bm{0}$.
By contrast, the BGasOpInf model accurately tracks the transient growth of the perturbation, including its peak around $t\approx 10$ and its subsequent decay, both at the training station~4 and at the unseen stations~5 and~6, where its predictions closely follow those of the FOM throughout the active part of the response.
Even at station~2, which is close to the leading edge and therefore the most challenging of the unseen stations, the model correctly captures the growth and decay of the response, with a peak energy that is within a factor of approximately three of the FOM (as opposed to several orders of magnitude for GasOpInf).
Once the perturbation has decayed by several orders of magnitude ($t\gtrsim 25$), the BGasOpInf model retains a small residual energy, which is nonetheless two to three orders of magnitude below the peak of the response.
We hypothesize that the model struggles to track the rapid decay of the perturbation at late times (due to vortical structures being advected out of the domain) because accommodating such behavior would require higher-than-quadratic dynamics in 50 dimensions.
For completeness, we also remark that the high-frequency oscillations at times $t\approx 20$ in Figure~\ref{fig:airfoil_testing} are due to small numerical artifacts as the wavepacket advects out of the computational domain through the homogeneous Neumann outflow boundary condition, and they have no measurable effect on our conclusions.
Vorticity snapshots from the testing impulse at the unseen station~5 with $\alpha = 1.60$ are shown in Figure~\ref{fig:airfoil_snapshots}(a), where we see that the BGasOpInf model correctly predicts the location, the shape and the amplitude of the wavepacket as it is advected downstream, while the GasOpInf model predicts an essentially quiescent flow.

Finally, we test the models on sinusoidal inputs $w(t) = a\sin(k\omega t)$ applied at the mid-chord station~4, with amplitudes $a\in\{0.3,\,1\}$ and frequencies $k\omega$ with $k\in\{1,2\}$, where $\omega \approx 0.8026$ is the vortex-shedding frequency estimated from the training data.
These tests probe the ability of the models to extrapolate beyond the training data, since the models were trained exclusively on unforced impulse responses.
The results are shown in Figure~\ref{fig:airfoil_forced}.
At frequency $\omega$ and amplitude $a = 0.3$, the BGasOpInf model reproduces both the early-time growth and the saturated energy of the forced response almost exactly, while the GasOpInf model underestimates the energy by approximately four orders of magnitude.
At the larger amplitude $a = 1$, the BGasOpInf model slightly overshoots the energy of the FOM at early times and settles onto a saturated state whose energy is slightly lower than that of the FOM, while the GasOpInf model remains approximately four orders of magnitude off.
At frequency $2\omega$, the response of the FOM exhibits a transient hump before settling onto a lower-amplitude periodic state.
The BGasOpInf model accurately captures the transient hump at both amplitudes, which the GasOpInf model misses completely, and it predicts the energy of the post-transient periodic state almost exactly for $a = 0.3$, and within a factor of approximately three for $a = 1$.
The vorticity snapshots in Figure~\ref{fig:airfoil_snapshots}(b) confirm that, at frequency $\omega$, the BGasOpInf model reproduces the vortical structures shed into the wake, while the GasOpInf model severely underestimates their amplitude.
Overall, the BGasOpInf model is accurate across the vast majority of the tests considered here, including impulses at unseen forcing locations and sinusoidal inputs that lie outside the training regime, and, even in the most challenging cases, it outperforms the POD-based GasOpInf model by orders of magnitude.

\begin{figure}
  \centering
  \includegraphics[width=0.85\textwidth]{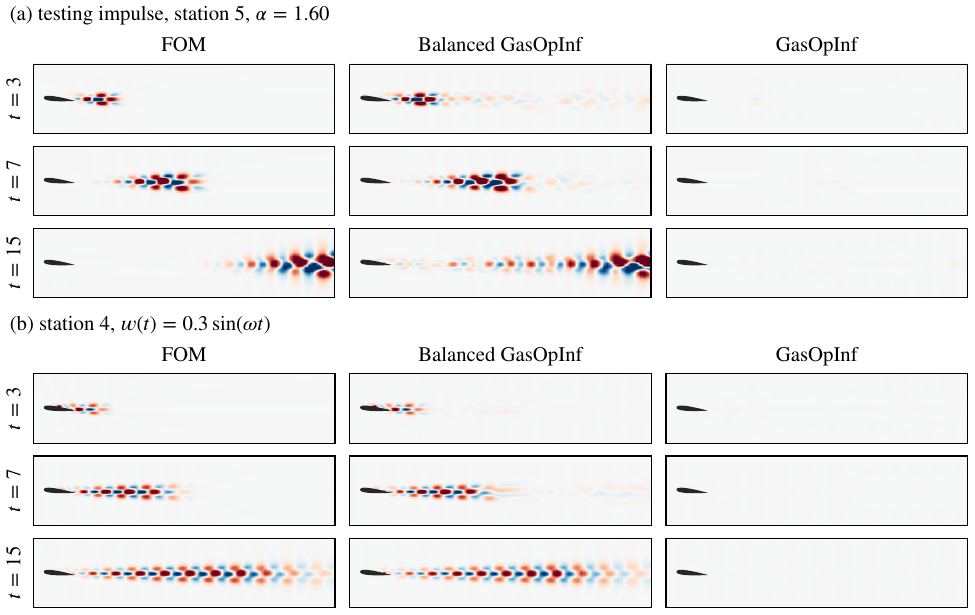}
  \caption{Airfoil flow. Vorticity snapshots of the FOM and of the reduced-order models (a) along the testing impulse response at the unseen station~5 with $\alpha = 1.60$, and (b) in response to the sinusoidal input $w(t) = 0.3\sin(\omega t)$ at station~4.}
  \label{fig:airfoil_snapshots}
\end{figure}

\FloatBarrier
\subsection{Flow Over a Finite Wing}
\label{subsec:wing}

Our last example is the three-dimensional incompressible flow over a finite wing with NACA 0012 cross section, unit chord and unit span (i.e., aspect ratio one) with closed end caps, at an angle of attack of $6^\circ$ and at a chord-based Reynolds number $Re = 1000$.
We include this example to demonstrate that the proposed method can be applied to large-scale three-dimensional problems, where the state dimension approaches $100$ million degrees of freedom.
The flow is simulated with the same immersed boundary projection solver used for the airfoil in Section~\ref{subsec:airfoil}, on the (non-dimensional) computational domain $[-3,15]\times[-4,4]\times[-4,4]$ with the half-chord of the wing located at the origin.
A uniform velocity $(1,0,0)$ is imposed at the inflow and at the four lateral boundaries, and a homogeneous Neumann condition is imposed at the outflow.
The mesh contains $400\times 260\times 260$ cells, with a uniform spacing of $0.0125$ in the region $[-0.75,0.75]^3$ that contains the wing, and it is stretched to a spacing of $0.1$ near the boundaries of the domain.
The state therefore has dimension $n = 3\,n_x n_y n_z\approx 8.08\times 10^7$, which is more than two orders of magnitude larger than in the airfoil example.
At these parameters, the flow admits a stable steady state (in accordance with the results in \cite{Pandi_Mittal_2023}), characterized by a pair of counter-rotating tip vortices, which we compute via time marching and show in Figure~\ref{fig:wing_base_modes}(a).
As in Section~\ref{subsec:airfoil}, we consider the dynamics of perturbations about the steady state, and we work with the weighted state $\widetilde{\vq} = \mW^{1/2}\vq$, where $\mW$ contains the ratio between the volume of each cell and the volume of the smallest cell. 
As in the previous example, we measure the whole weighted state, i.e., $\vy = \widetilde{\vq}$.

The perturbations are forced by inputs that act on the streamwise momentum equation with the spatial profile
\begin{equation}
  \label{eq:wing_forcing}
  b(x,y,z) = \exp\left\{-1250\left((x - x_c)^2 + (y - y_c)^2\right)\right\}h(z),
\end{equation}
where $h$ is a $C^2$ spanwise top-hat function that is equal to one for $\lvert z\rvert\leq 0.06$ and vanishes for $\lvert z\rvert\geq 0.10$.
We consider three stations located just above the suction side of the wing at the mid-span near the leading edge, the mid-chord and the trailing edge (stations 1, 2 and 3 in Figure~\ref{fig:wing_base_modes}(a)).
As for the airfoil, each forcing profile is projected onto the space of discretely divergence-free fields and normalized to unit norm, and it defines a column of the input matrix $\mB\in\R^{n\times 3}$.

\begin{figure}
  \centering
  \includegraphics[width=0.85\textwidth]{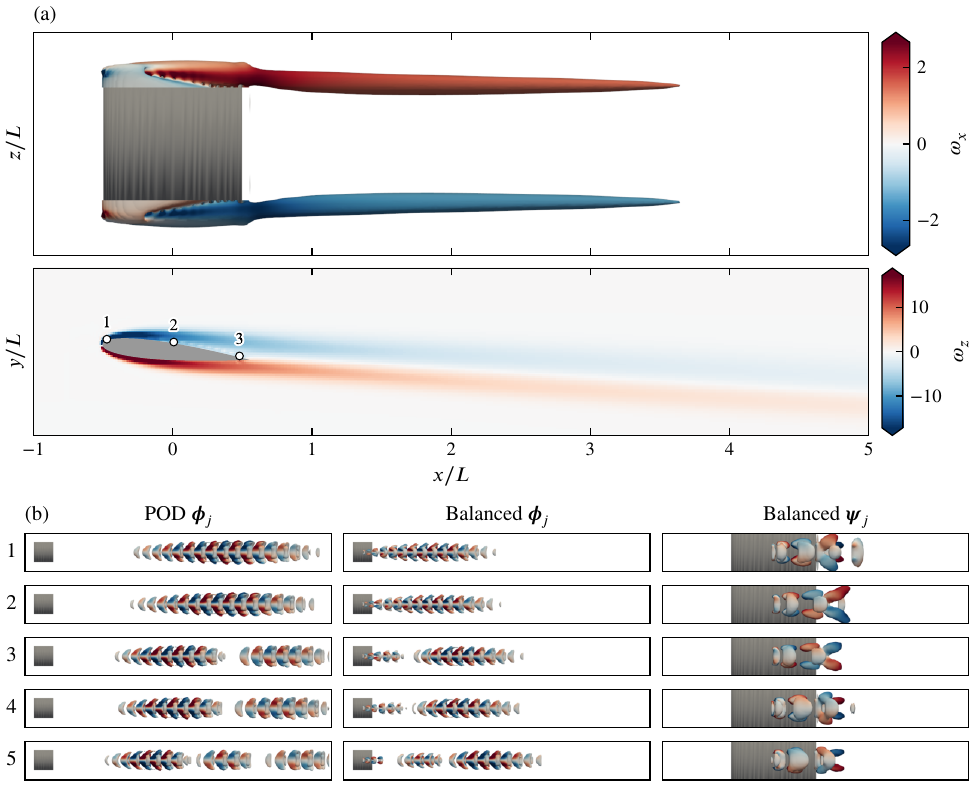}
  \caption{Finite wing. (a) Steady state at $Re = 1000$: isosurface $Q = 0.5$ of the $Q$-criterion colored by the streamwise vorticity $\omega_x$, viewed from above the wing (top), and spanwise vorticity $\omega_z$ in the mid-span plane, with the three forcing stations (bottom). (b) Leading POD modes (left), and columns $\vphi_j$ and $\vpsi_j$ of the trial and test matrices $\mPhi$ and $\mPsi$ computed by Algorithm~\ref{alg:checkpointed_balancing} (middle and right), viewed from above the wing. The test modes are shown in a close-up of the region around the wing.}
  \label{fig:wing_base_modes}
\end{figure}

\paragraph{Training.}
We train on the impulse responses from the initial conditions $\vq^{(j,k)}(0) = \alpha_k\mB_j$, with $j\in\{1,2,3\}$ and $\alpha_k\in\{0.1,\,1,\,2\}$, for a total of $s\,m = 9$ training trajectories over $t\in[0, 39.8]$.
As in the airfoil example, both the balancing procedure and POD are applied to the weighted snapshots $\widetilde{\vq}(t_i) = \mW^{1/2}\vq(t_i)$.
Snapshots are stored every $0.04$ time units for $t\in[0,4)$, and every $0.2$ time units for $t\in[4,39.8]$, for a total of $280$ snapshots per trajectory, and the empirical reachability Gramian is factored using trapezoidal quadrature weights as in the airfoil example.
The checkpointing procedure follows the one described in Section~\ref{subsec:airfoil}: we place $50$ checkpoints every $0.04$ time units on $[0,2)$ to resolve the fast transient that follows the impulse, and $18$ additional checkpoints approximately uniformly spaced on $[2,18]$, for a total of $N_c = 68$ checkpoints per trajectory.
All subtrajectories are defined over a window of length $\Delta T = 21.8$, the integrals in \eqref{eq:G_checkpoint} are evaluated with the rectangle rule on a uniform grid with spacing $0.2$, and subtrajectory samples that do not coincide with a stored snapshot are computed via six-point local Lagrange interpolation.
Since each training trajectory occupies approximately $65$\,GB of memory, and the GPU RAM was limited to $100$\,GB, we first performed a POD-based lossless compression of the training data, so that all the training was then performed in the compressed coordinates.
The wall-clock times for this example are reported in Table~\ref{tab:timings_wing}.
Since POD and balancing are carried out in the compressed coordinates, we report separately the cost of the lossless compression of the training data (i.e., a single pass over the stored snapshots to form their Gram matrix and its eigendecomposition), and the costs of POD and balancing in the compressed coordinates.
As in the airfoil example, the cost of computing the balanced bases is comparable to that of computing POD modes, and it is small compared to the cost of generating the training data.

\begin{table}
  \centering
  \caption{Finite wing. Wall-clock times (in seconds). The FOM was simulated on an NVIDIA GH200 GPU, and the ROM time refers to the integration of one trajectory over the training horizon $t\in[0,40]$. The lossless compression requires a single pass over the stored snapshots, and it is shared by POD and balancing, which are then carried out in the compressed coordinates.}
  \label{tab:timings_wing}
  \begin{tabular}{cccccc}
    \toprule
    Lossless compression & POD & Balancing (Alg.~\ref{alg:checkpointed_balancing}) & OpInf fit & ROM integration & FOM integration \\
    \midrule
    $215$ & $0.35$ & $1.09$ & $0.72$ & $0.82$ & $1459$ \\
    \bottomrule
  \end{tabular}
\end{table}
We take $r = 50$, for which the leading balanced modes capture more than $99\%$ of the energy of the Hankel singular values and the leading POD modes capture more than $99\%$ of the variance of the training data.
In this example, the latent-space dynamics are inferred with the standard quadratic Operator Inference formulation \eqref{eq:operator_inference} with reprojection, and with Tikhonov regularization (with regularization parameter $\lambda = 10^{-6}$) on the third-order tensor governing the quadratic interactions.
Regularization is included in the objective of \eqref{eq:operator_inference} as an additional term $\lambda\lVert\mH_r\rVert_F^2$, and it has been shown to encourage stability of the inferred dynamics.
We found that the stability-preserving formulation used in the airfoil example was not necessary here, although some regularization was required to prevent the unregularized models from blowing up.
Furthermore, since the input matrix is known, the input term is computed by projection, i.e., $\mB_r = \mPsi^\top\mW^{1/2}\mB$, rather than inferred.
As in the CGL examples, we refer to the resulting model as BOpInf, and we compare it with an OpInf model of the same dimension trained on the same data and with the same regularization, but where the high-dimensional data are projected orthogonally onto the span of the leading POD modes.
Since the snapshots are sampled non-uniformly in time, the POD modes are computed from the same non-uniformly sampled training data used for balancing, with each weighted snapshot $\widetilde{\vq}(t_i)$ scaled by $\sqrt{\gamma_i}$, where $\gamma_i$ are the same trapezoidal quadrature weights used for the empirical reachability Gramian, so that the POD modes are representative of the time-integrated (rather than sample-averaged) energy of the data.
The leading POD modes and the columns of $\mPhi$ and $\mPsi$ are shown in Figure~\ref{fig:wing_base_modes}(b).
The picture is consistent with the one observed for the airfoil: the POD modes are wavepackets supported far downstream in the wake, where the training trajectories carry most of their energy, the trial modes are wavepackets that are supported closer to the wing, and the test modes are localized on the suction side of the wing and in its immediate wake, where the flow is most receptive to perturbations.

\paragraph{Testing.}
We test the models on impulse responses forced at the three training stations with amplitudes that were not seen during training, namely $\alpha = 0.77$ at station~1, $\alpha = 1.5$ at station~2 and $\alpha = 1.82$ at station~3, and on a sinusoidal input $\mB\vu(t) = \left(0.5\mB_2 + 1.25\mB_3\right)\sin(2\pi f t)$ applied from rest, with $f = 0.8$.
This frequency corresponds to the peak of the energy-weighted spectrum of the training trajectories, and it is consistent with the frequency of the wavepackets observed in the airfoil example.
We recall that the models were trained exclusively on unforced impulse responses, so the sinusoidal input tests the ability of the models to extrapolate beyond the training data.
Additionally, the training trajectories were generated by individually perturbing each of the three training stations, while the sinusoidal input excites two stations simultaneously, so this test also probes the ability of the models to capture the nonlinear interactions between different inputs.
The time history of the perturbation energy is shown in Figure~\ref{fig:wing_testing}, and representative vorticity snapshots are shown in Figure~\ref{fig:wing_snapshots}.
For all three impulses, the BOpInf model follows the energy of the FOM almost exactly throughout the transient, including the initial growth, the plateau and the decay of the response as the wavepacket is advected out of the domain, while the OpInf model captures the transient reasonably well only at station~1, and it underestimates the energy of the response by a factor of approximately two at station~3 and by approximately one order of magnitude at station~2.
Once the perturbation of the FOM has decayed ($t\gtrsim 20$), the BOpInf model retains a small residual energy which, as in the airfoil example, lies two to three orders of magnitude below the peak of the response, while the OpInf model exhibits a spurious late-time growth.
These observations are confirmed by the snapshots in Figure~\ref{fig:wing_snapshots}(a,b): the BOpInf model predicts the correct location, shape and amplitude of the wavepacket as it is advected downstream, while the OpInf model either predicts a wavepacket that travels too fast (at station~1), or an essentially quiescent flow (at station~2).
In response to the sinusoidal input, the BOpInf model reaches the correct periodic state at the correct time, while the OpInf model underestimates the energy by a factor of approximately three and drifts instead of settling onto a periodic state.
The corresponding snapshots in Figure~\ref{fig:wing_snapshots}(c) show that BOpInf correctly reproduces the train of vortical structures shed into the wake, while the structures predicted by OpInf are significantly weaker and detached from the wing.

\begin{figure}
  \centering
  \includegraphics[width=0.85\textwidth]{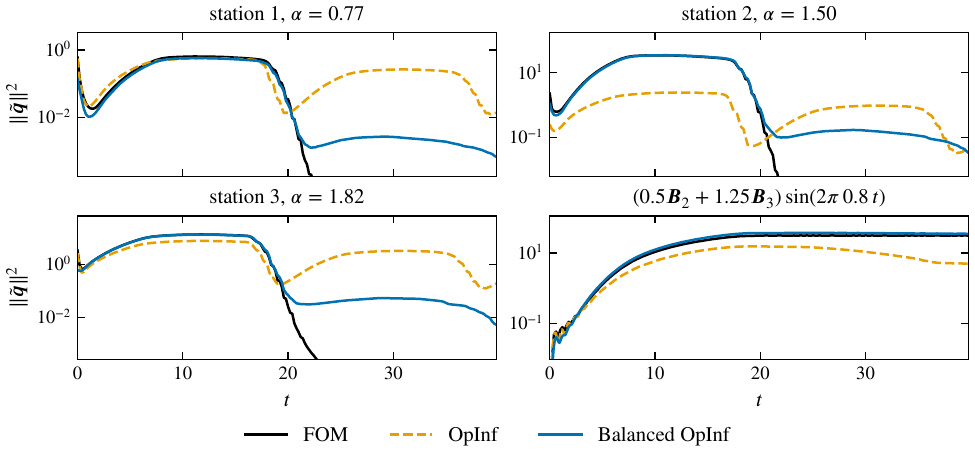}
  \caption{Finite wing. Perturbation energy of the testing impulse responses at the three training stations, and of the response to the sinusoidal input $\left(0.5\mB_2 + 1.25\mB_3\right)\sin(2\pi f t)$ with $f = 0.8$ (bottom right).}
  \label{fig:wing_testing}
\end{figure}

\begin{figure}
  \centering
  \includegraphics[width=0.85\textwidth]{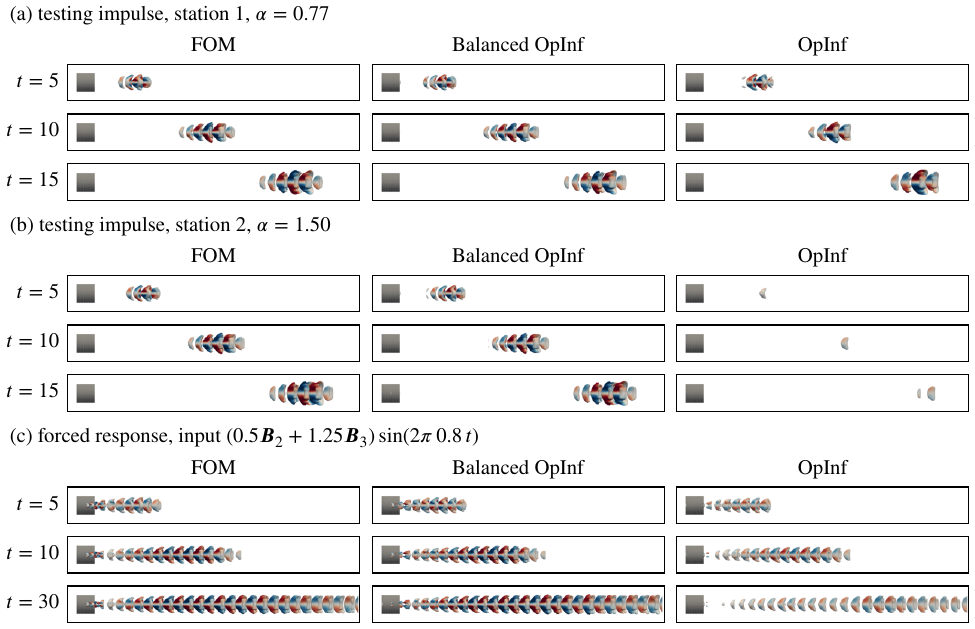}
  \caption{Finite wing. Snapshots of the FOM and of the reduced-order models (top view) along (a) the testing impulse response at station~1 with $\alpha = 0.77$, (b) the testing impulse response at station~2 with $\alpha = 1.50$, and (c) the response to the sinusoidal input in Figure~\ref{fig:wing_testing}.}
  \label{fig:wing_snapshots}
\end{figure}

\FloatBarrier
\section{Conclusions}
\label{sec:conclusions}

In this paper, we have introduced a non-intrusive and \emph{optimization-free} framework to compute balancing-based oblique projections for model reduction of high-dimensional nonlinear systems that exhibit large-amplitude transient energy growth.
State-of-the-art non-intrusive methods typically achieve dimensionality reduction via orthogonal projection onto high-energy linear or nonlinear manifolds, and this often leads to the truncation of low-energy, yet dynamically important, states.
While this issue is classically addressed by Balanced Truncation and its extensions, these methods are intrusive, since they leverage adjoint information to identify dynamically-important subspaces.
Here, we have addressed this issue by computing balancing-based oblique projections from high-dimensional data, leveraging a modified version of the empirical Gramians introduced by \cite{Lall1999} for nonlinear systems.
Specifically, while the computation of the empirical observability Gramian, as defined in \cite{Lall1999}, is intractable in high dimensions, its projection onto a suitably chosen restriction subspace can be computed inexpensively from forward simulation data.
We have shown that this procedure provably recovers the oblique projection defined via Balanced Truncation when the dynamics are linear and when the range of either of the Gramians is contained in the restriction subspace.
Furthermore, in the linear case and in the absence of adjoint information, we have shown that under a notion of minimax optimality the restriction subspace should span the dominant eigenspaces of the reachability Gramian, and this result informed the development of an efficient checkpointing algorithm that identifies the restriction subspace within the range of the already available impulse responses.
The overall cost of the procedure is therefore comparable to that of a standard SVD of the training data.
Once the balanced encoder-decoder pairs are computed, we learn the latent-space dynamics via Operator Inference or its globally-asymptotically-stable variant, although we remark once more that any other method to learn latent-space dynamics can be used.
The proposed framework was tested on several examples of increasing size and complexity, namely the complex Ginzburg-Landau equation in both stable and unstable configurations, the two-dimensional flow past a NACA 0012 airfoil at $Re = 600$, and the three-dimensional flow past a finite wing at $Re = 1000$ with approximately $8\times 10^7$ degrees of freedom.
In all these examples, the Balanced Operator Inference models accurately captured the large-amplitude transient response of the full-order model to impulses and to sinusoidal inputs that were never seen during training, and they significantly outperformed their POD-based counterparts, with a reduction in forecasting error of up to two orders of magnitude.

\section*{Acknowledgments}
The author gratefully acknowledges startup support from the Newark College of Engineering at New Jersey Institute of Technology.
All computations were performed on NCSA DeltaAI through NSF ACCESS allocation PHY260088.
The author also acknowledges the use of Claude Code powered by Opus 5.5 to assist with the numerical implementation of the algorithms and the data post-processing.
The author takes full responsibility for the results presented in this manuscript.

\printcredits

\appendix

\section{Proof of Theorem \ref{thm:balanced_truncation_bounds}}
\label{sec:proof_of_thm_1}

The first half of the proof follows directly from Weyl's inequality $\lvert\sigma_k\left(\mH + \Delta\right) - \sigma_k\left(\mH\right)\rvert \leq \sigma_1(\Delta)$ with $\widetilde{\mH} = \mH + \Delta$ and $\Delta \coloneqq \widetilde{\mH} - \mH = \mY^\top\mPi\mX - \mY^\top\mX = -\mY^\top\mPi_\perp\mX$.
  Then, using the submultiplicative property of the spectral norm and the fact that $\mPi_\perp$ is idempotent, we have, as desired,
  \begin{equation}
    \begin{aligned}
      \sigma_1(\Delta) = \lVert \mY^\top\mPi_\perp\mX\rVert &\leq \lVert \mY^\top\mPi_\perp\rVert\lVert \mPi_\perp\mX\rVert = \lVert \mPi_\perp \underbrace{\mY\mY^\top}_{\mW_O} \mPi_\perp\rVert^{1/2} \lVert \mPi_\perp \underbrace{\mX\mX^\top}_{\mW_R} \mPi_\perp\rVert^{1/2} = \delta_O \delta_R.
    \end{aligned}
  \end{equation}
  The second half of the proof relies on the residual form of Wedin's $\sin\theta$ theorem \citep[Theorem~V.4.4]{stewart1990matrix}. 
  Applying Wedin's theorem to the leading $r$ singular triplets of $\mH$ and $\widetilde{\mH}$, we have (directly from the theorem statement)
  \begin{equation}
    \label{eq:wedin}
    \max\big\{\sin\theta_{\max}\left(\mU_r,\widetilde{\mU}_r\right), \sin\theta_{\max}\left(\mV_r,\widetilde{\mV}_r\right)\big\} \leq \varepsilon^{-1}\max\{\lVert \mathcal{R}\rVert, \lVert \mathcal{L}\rVert\},
  \end{equation}
  provided $\varepsilon>0$, where $\mathcal{R} \coloneqq \left(\mH - \widetilde{\mH}\right)\widetilde{\mV}_r$ and $\mathcal{L} \coloneqq \left(\mH - \widetilde{\mH}\right)^\top\widetilde{\mU}_r$ are the residuals of the leading $r$ singular triplets of $\widetilde{\mH}$ with respect to $\mH$.
  Using the orthonormality of the columns of $\widetilde{\mV}_r$ and the bound in the first half of the proof, we have $\lVert \mathcal{R}\rVert = \big\lVert \left(\mH - \widetilde{\mH}\right)\widetilde{\mV}_r\big\rVert \leq \lVert \mH - \widetilde{\mH}\rVert = \lVert\Delta\rVert \leq \delta_O\delta_R$, and similarly $\lVert \mathcal{L}\rVert \leq \delta_O\delta_R$.
  Thus, the right-hand side of~\eqref{eq:wedin} can be bounded by $\varepsilon^{-1}\delta_O\delta_R$, and, in particular, $\sin\theta_{\max}(\mV_r,\widetilde{\mV}_r)\leq\varepsilon^{-1}\delta_O\delta_R$.

  To prove \eqref{eq:sin_theta_bound}, let $\mW_R=\mX\mX^\top$ be positive definite and define $\mT\coloneqq\mW_R^{-1/2}\mX$, so that $\mW_R^{-1/2}\mPhi = \mT\mV_r\mSigma_r^{-1/2}$ and $\mW_R^{-1/2}\widetilde{\mPhi} = \mT\widetilde{\mV}_r\widetilde{\mSigma}_r^{-1/2}$.
  Since the diagonal matrices $\mSigma_r^{-1/2}$ and $\widetilde{\mSigma}_r^{-1/2}$ do not affect the ranges, by definition we have $\theta^{\mW_R^{-1}}_{\max}(\mPhi,\widetilde{\mPhi}) = \theta_{\max}(\mT\mV_r,\mT\widetilde{\mV}_r)$.
  Now, $\mT^\top\mT = \mX^\top\left(\mX\mX^\top\right)^{-1}\mX = \mathbb{P}_{\mX^\top}$ is the orthogonal projector onto $\range(\mX^\top)$, so $\mT$ preserves inner products (and hence norms) between vectors in $\range(\mX^\top)$.
  Since $\mV_r$ and $\widetilde{\mV}_r$ are right singular vectors of $\mY^\top\mX$ and $\mY^\top\mPi\mX$ associated with nonzero singular values, then $\range(\mV_r),\range(\widetilde{\mV}_r)\subseteq\range(\mX^\top)$.
  It follows that $\mT\mV_r$ and $\mT\widetilde{\mV}_r$ have orthonormal columns, and that $\mZ\coloneqq\left(\mI - \mV_r\mV_r^\top\right)\widetilde{\mV}_r$ has columns in $\range(\mX^\top)$, so that $\lVert\mT\mZ\rVert = \lVert\mZ\rVert$.
  Therefore,
  \begin{equation}
    \sin\theta_{\max}\left(\mT\mV_r,\mT\widetilde{\mV}_r\right) = \big\lVert\left(\mI - \mT\mV_r\mV_r^\top\mT^\top\right)\mT\widetilde{\mV}_r\big\rVert = \big\lVert\mT\mZ\big\rVert = \big\lVert\mZ\big\rVert = \sin\theta_{\max}\left(\mV_r,\widetilde{\mV}_r\right)\leq \varepsilon^{-1}\delta_O\delta_R,
  \end{equation}
  where the second equality follows from $\mT^\top\mT\widetilde{\mV}_r = \widetilde{\mV}_r$, and this proves the first half of \eqref{eq:sin_theta_bound}.
  To prove the second half of \eqref{eq:sin_theta_bound}, we observe that $\mX^\top\mY\mU_r = \mH^\top\mU_r = \mV_r\mSigma_r$ and $\mX^\top\mPi\mY\widetilde{\mU}_r = \widetilde{\mH}^\top\widetilde{\mU}_r = \widetilde{\mV}_r\widetilde{\mSigma}_r$, so that
  \begin{equation}
    \mW_R\mPsi = \mX\mX^\top\mY\mU_r\mSigma_r^{-1/2} = \mX\mV_r\mSigma_r^{1/2},\qquad \mW_R\widetilde{\mPsi} = \mX\mX^\top\mPi\mY\widetilde{\mU}_r\widetilde{\mSigma}_r^{-1/2} = \mX\widetilde{\mV}_r\widetilde{\mSigma}_r^{1/2}.
  \end{equation}
  Multiplying on the left by $\mW_R^{-1/2}$ gives $\mW_R^{1/2}\mPsi = \mT\mV_r\mSigma_r^{1/2}$ and $\mW_R^{1/2}\widetilde{\mPsi} = \mT\widetilde{\mV}_r\widetilde{\mSigma}_r^{1/2}$, so that $\theta^{\mW_R}_{\max}(\mPsi,\widetilde{\mPsi}) = \theta_{\max}(\mT\mV_r,\mT\widetilde{\mV}_r) = \theta_{\max}(\mV_r,\widetilde{\mV}_r)$ by the same argument used above, and this concludes the proof.

\bibliographystyle{cas-model2-names}

\bibliography{references}



\end{document}